\newcommand{\outputs}{\mathsf{outputs}}
\newcommand{\net}{\mathsf{net}}
\newcommand{\word}{\mathsf{word}}
\newcommand{\sent}[1]{\ensuremath{\mathtt{#1}}} %
\newcommand{\GG}{{\cal G}}
\newcommand\tuple[1]{\langle #1 \rangle}
\newcommand{\sset}[2]{\left\{~#1  \left|
      \begin{array}{l}#2\end{array}
    \right.     \right\}}
\newtheorem{lemma}{Lemma}
\newtheorem{definition}{Definition}
\newtheorem{theorem}{Theorem}
\newtheorem{conjecture}{Conjecture}
\newtheorem{proposition}{Proposition}
\newtheorem{example}{Example}
\newtheorem{corollary}{Corollary}
\begin{document}

\title{The Quest for Optimal Sorting Networks: Efficient Generation of
       Two-Layer Prefixes\thanks{Supported by 
       the Israel Science Foundation, grant 182/13 and by
       the Danish Council for Independent Research, Natural Sciences.}}
\author{\IEEEauthorblockN{Michael Codish\\}
\IEEEauthorblockA{Department of Computer Science\\
    Ben-Gurion University of the Negev\\
    PoB 653\\
    Beer-Sheva, Israel 84105\\[2ex]}
\and
\IEEEauthorblockN{Lu\'{\i}s Cruz-Filipe and Peter Schneider-Kamp\\}
\IEEEauthorblockA{Department of Mathematics and Computer Science\\
University of Southern Denmark\\
Campusvej 55\\
5230 Odense M, Denmark}}
\maketitle

\begin{abstract}
  Previous work identifying depth-optimal $n$-channel sorting networks
  for $9\leq n \leq 16$ is based on exploiting symmetries of the first
  two layers.  However, the naive generate-and-test approach typically
  applied does not scale.
  This paper revisits the problem of generating two-layer prefixes
  modulo symmetries. An improved notion of symmetry is provided and a
  novel technique based on regular languages and graph isomorphism is
  shown to generate the set of non-symmetric representations.  An
  empirical evaluation demonstrates that the new method outperforms
  the generate-and-test approach by orders of
  magnitude and easily scales until $n=40$.
\end{abstract}

\section{Introduction}

Sorting networks are a Computer Science classic. Based on a very
simple model, their underlying theory is surprisingly deep and
complex.  The study of sorting networks has intrigued computer
scientists since the middle 1950s. 
Informally, a sorting network is a comparator network that sorts all of
its inputs. A comparator network is a network constructed from $n$
channels that carry $n$ input values from ``left to right'' through a
sequence of comparators. A comparator is a component attached to a
pair of channels such that the pair of values coming in from the left
come out sorted on the right. Consecutive comparators can be viewed as
a ``parallel layer'' if no two touch on the same channel.
For an overview on sorting networks see for example, Knuth~\cite{Knuth73}
or Parberry~\cite{Parberry87}.

Ever since sorting networks were introduced, there has been a quest to
find optimal sorting networks for particular small numbers of inputs:
optimal depth networks (in the number of parallel layers), as well as
optimal size (in the number of comparators). In this paper we focus on
optimal depth sorting networks.

Even today, very little progress has been seen.
Optimal depth sorting networks for $n\leq 8$ are given by Knuth (1973),
Section~5.3.4 of~\cite{Knuth73}, which also details specific sorting
networks for $n\leq 16$ with the smallest depths known at the time. In
1991, Parberry~\cite{DBLP:journals/mst/Parberry91} showed that the
networks given by Knuth are optimal for $n = 9$ and $n = 10$.
Parberry's result was obtained by implementing an exhaustive
search with pruning based on symmetries in the first two layers of the
comparator networks, and executing the algorithm on a supercomputer
(consuming 200 hours of low priority computation).

In 2011, Morgenstern and Schneider~\cite{DBLP:conf/mbmv/MorgensternS11} applied
SAT solvers to search for optimal depth sorting networks, and were
able to reproduce the known results for $n<10$ with an acceptable
runtime, but still required 21 days of computation for
$n=10$, shredding any hope to achieve reasonable runtimes for $n \geq 11$.
Optimality for the cases $11\leq n\leq 16$ is shown by Bundala and
Z{\'a}vodn{\'y} (2014) in~\cite{DBLP:conf/lata/BundalaZ14}, first by
showing that $n=11$ requires at least depth $8$, and then by showing
that $n=13$ requires at least depth $9$. Their results are obtained
using a SAT solver, and are also based on identifying symmetries in
the first two layers of the sorting networks.

Both Parberry~\cite{DBLP:journals/mst/Parberry91} and then Bundala and
Z{\'a}vodn{\'y}~\cite{DBLP:conf/lata/BundalaZ14} consider the
following question: what is the smallest set $S$ of two-layer network
prefixes that need be considered in the search for minimal depth
sorting networks? In particular, such that, if no element of $S$ can be
extended to a sorting network of depth~$d$, then no depth $d$ sorting
network exists.
The approach
in~\cite{DBLP:conf/lata/BundalaZ14}
\linebreak
identified~$212$ two-layer network prefixes for $n=13$; however, the
calculation of this set required $32$ minutes of computation, and this
approach does not scale for larger values of~$n$.

In this paper, we show how to generate the same set of $212$ two-layer
prefixes for $n=13$ in ``under a second'' and,
following ideas presented in~\cite{DBLP:conf/lata/BundalaZ14},
improve results such that
only $117$ relevant two-layer prefixes need to be considered. Our approach also scales
well, i.e.\ we can compute the set of $34{,}486$ relevant prefixes
for $n=30$ in ``under a minute'', and that of relevant prefixes for
$n=40$ in around two hours.  Our main contribution here is to
illustrate how focusing on concepts of regular languages, graph
isomorphism, and symmetry breaking facilitates the efficient generation
of all two-layer prefixes modulo isomorphism of the networks.

\section{Preliminaries on Sorting Networks}
\label{sec:prelim}

A \emph{comparator network} $C$ with $n$ channels and depth $d$ is a
sequence $C = L_1;\ldots;L_d$ where each \emph{layer} $L_k$ is
a set of comparators $(i,j)$ for pairs of channels $i < j$. At each
layer, every channel may occur in at most one comparator.  A layer
is \emph{maximal} if it contains $\left\lfloor\frac n2\right\rfloor$
comparators.  The \emph{depth} of $C$ is the number of layers $d$, and
the \emph{size} of $C$ is the total number of comparators in its
layers. If $C_1$ and $C_2$ are comparator networks, then $C_1;C_2$
denotes the comparator network obtained by concatenating the layers of
$C_1$ and $C_2$; if $C_1$ has $m$ layers, it is an \emph{$m$-layer
  prefix} of $C_1;C_2$.
An input $\bar x\in\{0,1\}^n$ propagates through $C$ as follows:
$\bar x_0 = \bar x$, and for $0<k\leq d$, $\bar x_k$ is a permutation
of $\bar x_{k-1}$ obtained such that for each comparator $(i,j)\in
L_k$, the values at positions $i$ and $j$ of $\bar x_{k-1}$ are
reordered in $\bar x_k$ so that the value at position $i$ is not
larger than the value at position $j$. The output of the network for
input $\bar x$ is $C(\bar x)=\bar x_d$, and
$\outputs(C)=\sset{C(\bar x)}{\bar x\in\{0,1\}^n}$.  The
comparator network $C$ is a \emph{sorting network} if all elements of
$\outputs(C)$ are sorted (in ascending order).
The zero-one principle~(e.g.~\cite{Knuth73}) implies that a sorting
network also sorts any other totally ordered set, e.g.~integers.
The \emph{optimal sorting network problem} is about finding the smallest
depth and the smallest size of a sorting network for a given number of
channels $n$.

A \emph{generalized comparator network} is defined like a comparator network,
except that it may contain comparators $(i,j)$ with $i>j$, which
order their outputs in descending order, instead of ascending.
It is well known~(Exercise~5.3.4.16 in~\cite{Knuth73}) that generalized sorting
networks are no more powerful than sorting networks: a generalized
sorting network can always be \emph{untangled} into a (standard) sorting
network with the same size and depth.

\begin{wrapfigure}[11]{r}{0.15\textwidth}
  \vspace*{-2ex}
  $(a)$~\raisebox{-\height/2}{\makebox{\includegraphics{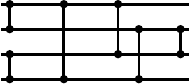}}}

  \vspace{1em}

  $(b)$~\raisebox{-\height/2}{\makebox{\includegraphics{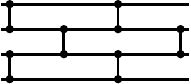}}}

  \vspace{1em}

  $(c)$~\raisebox{-\height/2}{\makebox{\includegraphics{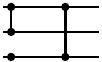}}}

  \vspace{1em}

  $(d)$~\raisebox{-\height/2}{\makebox{\includegraphics{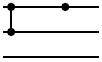}}}
  \label{ex:sn}
\end{wrapfigure}

Images~$(a)$ and~$(b)$ on the right depict sorting networks on four
channels, each consisting of four layers. The channels are indicated
as horizontal lines (with channel $4$ at the bottom), comparators are
indicated as vertical lines connecting a pair of channels, and input
values are assumed to propagate from left to right. Images~$(c)$ and~$(d)$
specify \emph{patterns}.
A pattern $P$ is a partially specified network: it is a set of
channels with comparators, but it may also include \emph{external comparators}.
These are singleton nodes representing a comparator connected to a channel not in $P$.  
A comparator network $C$ contains a pattern $P$ of depth~$d$ on $m$
channels if there are a depth $d$ prefix $C_1$ of $C$ and a subset
$\{c_1,\ldots,c_m\}$ of channels of $C_1$ such that: (i)~$c_i<c_j$ if
$i<j$; (ii)~if $P$ contains a comparator between channels $i$ and $j$
at layer~$1\leq k\leq d$, then $C_1$ contains a comparator between
channels $c_i$ and $c_j$ at layer~$k$; (iii)~if $P$ contains an
external comparator touching channel $i$ at layer~$1\leq k\leq d$,
then $C_1$ contains a comparator between channel $c_i$ and a channel
$c\not\in\{c_1,\ldots,c_m\}$ at layer~$k$; (iv)~$C_1$ 
contains no other comparators connecting to or between channels
$c_1,\ldots,c_m$.
The depth~$2$, three-channel pattern depicted in~$(c)$ occurs in network~$(a)$
but not in~$(b)$, while the pattern in~$(d)$ does not occur in
either network~$(a)$ or~$(b)$: its third channel is never used, while
all channels of~$(a)$ and~$(b)$ are used in the
first two layers.

We can use permutations $\pi$ on channels to manipulate (layers of)
comparator networks.  %
For a layer $L$, $\pi(L)$ contains the comparator $(\pi(i),\pi(j))$
iff $L$ contains $(i,j)$. If it is always the case that $\pi(i) <
\pi(j)$, then $\pi(L)$ is also a layer, otherwise it is a
\emph{generalized layer}. The extension to networks is
straightforward, and we write $C_1 \approx C_2$ ($C_1$ is equivalent
to $C_2$) iff there is a permutation $\pi$ such that $C_1$ is obtained
by untangling the (generalized) comparator network $\pi(C_2)$.  The
two networks~$(a)$ and~$(b)$ above are equivalent via the permutation
$(1\,3)(2\,4)$ and the application of the construction for untangling
described in~\cite{Knuth73}.

Parberry~\cite{DBLP:journals/mst/Parberry91} shows that the first
layer of a depth-optimal sorting network on $n$ channels can be
taken to consist of the comparators $(2k-1,2k)$ for $1 \leq k \leq
\left\lfloor \frac{n}{2} \right\rfloor$. We denote this layer by $F_n$.  The
networks~$(a)$ and~$(b)$ have first layer $F_4$.
In general, when $L_1;C$ is an $n$ channel
comparator network, we call a channel of $C$ ``min'' (``max'') if it
is connected to the minimum (maximum) output of a comparator in $L_1$,
and ``free'' if it does not occur in a comparator of~$L_1$.

We make use of the following two lemmata, which are proved
in~\cite{DBLP:conf/lata/BundalaZ14}.
The first lemma originates from~\cite{DBLP:journals/mst/Parberry91}.

\begin{lemma}%
  Let $\pi$ be a permutation
  such that $\pi(F_n) = F_n$ and let $L$ be a layer on $n$ channels
  such that $\pi(L)$ is a layer. If there is an $n$-channel
  sorting network of the form $F_n;L;C$ with depth $d$, then there is one of the form
  $F_n;\pi(L);C'$ with depth $d$.
\end{lemma}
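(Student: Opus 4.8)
The plan is to take the permutation $\pi$ and use it to transform the given sorting network $F_n;L;C$ into an equivalent network, then argue that the result can be massaged into the desired form $F_n;\pi(L);C'$. Since $\pi(F_n)=F_n$ and $\pi(L)$ is a (standard) layer by hypothesis, applying $\pi$ to $F_n;L;C$ gives the generalized comparator network $\pi(F_n;L;C) = \pi(F_n);\pi(L);\pi(C) = F_n;\pi(L);\pi(C)$, where the first two layers are already standard layers (not merely generalized ones). The only place where generalized comparators can arise is in the tail $\pi(C)$.

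The first step is to recall that applying a permutation to a sorting network yields a generalized sorting network: if $F_n;L;C$ sorts, then so does $\pi(F_n;L;C)$, because permuting the channels just relabels inputs and outputs consistently. This preserves depth, since $\pi$ acts layerwise and never merges or splits layers. The second step is to invoke the untangling result (Exercise~5.3.4.16 in~\cite{Knuth73}, cited in the excerpt) to convert the generalized sorting network $F_n;\pi(L);\pi(C)$ into a standard sorting network of the same depth. The key subtlety I would need to address is that untangling must be done in a way that leaves the already-standard prefix $F_n;\pi(L)$ untouched, so that only the tail is modified into some $C'$.

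The hard part will be precisely this last point: ensuring the untangling procedure does not disturb the first two layers. The untangling construction in~\cite{Knuth73} works by scanning the network and, whenever a generalized (descending) comparator $(i,j)$ with $i>j$ is encountered, swapping the two channels from that point onward (equivalently, composing with a transposition applied to the remaining suffix). I would argue that since $F_n;\pi(L)$ already consists of standard (ascending) comparators, the untangling scan encounters no descending comparators within these two layers, hence performs no swaps there, and the prefix $F_n;\pi(L)$ is preserved verbatim. All channel-swapping is confined to the suffix, producing a standard network $F_n;\pi(L);C'$ for some $C'$ of the same depth as $C$.

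Finally, I would assemble these observations: $\pi$ maps the depth-$d$ sorting network $F_n;L;C$ to a depth-$d$ generalized sorting network whose first two layers are exactly $F_n;\pi(L)$; untangling the suffix yields a standard depth-$d$ sorting network of the form $F_n;\pi(L);C'$, which is precisely what the statement claims. I expect the bulk of the proof to be this untangling-preserves-the-prefix argument, while the remaining steps are essentially bookkeeping about how $\pi$ distributes over concatenation and preserves the sorting property and depth.
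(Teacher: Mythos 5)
The paper itself offers no proof of this lemma---it is imported from \cite{DBLP:conf/lata/BundalaZ14} and originates with Parberry---so your proposal must be judged against the standard argument, whose skeleton you have correctly identified: apply $\pi$, untangle, and check that the standard prefix survives. The prefix part of your argument is sound: since $\pi(F_n)=F_n$ and $\pi(L)$ is a layer, any descending comparators of $\pi(F_n;L;C)$ lie in the suffix $\pi(C)$, and each untangling swap rewrites only comparators occurring after the offending one, so the untangled network has the form $F_n;\pi(L);C'$ and depth is preserved.

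The genuine gap is the step you treat as routine: the claim that $\pi(F_n;L;C)$ is a generalized \emph{sorting} network is false for every $\pi\neq\mathrm{id}$. Permuting channels changes the positions at which output values appear, and sortedness is a property of positions: running $\pi(N)$ on a suitably rearranged input mirrors a run of $N$, with the contents of channel $c$ in $N$ appearing on channel $\pi(c)$ in $\pi(N)$; hence if $N=F_n;L;C$ sorts, the outputs of $\pi(N)$ are sorted sequences whose entries have been rearranged by $\pi$---not sorted sequences. Concretely, for $n=4$ and $\pi=(1\,3)(2\,4)$, which fixes $F_4$, \emph{any} sorting network $N$ with first layer $F_4$ yields $\pi(N)(1,0,0,0)=(0,1,0,0)$. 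Consequently, Exercise~5.3.4.16 of \cite{Knuth73}, whose hypothesis is a generalized network that \emph{sorts}, cannot be invoked as a black box, and your proof never establishes that $F_n;\pi(L);C'$ sorts---which is the entire content of the lemma. The missing ingredient is that sorting is invariant under the equivalence $\approx$: untangling composes a network's output function with a single \emph{fixed} permutation $\sigma$ (the product of the transpositions it performs), so the untangled network outputs, on every input, the sorted sequence rearranged by the fixed permutation $\pi^{-1}\sigma$; one must then argue that a \emph{standard} comparator network whose outputs are always a fixed rearrangement of the sorted input necessarily sorts (equivalently, that $\sigma=\pi$), which can be done, for instance, by tracking inputs containing a single $1$ or a single $0$. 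Once that lemma is supplied, the remainder of your argument goes through.
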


\begin{lemma}
  \label{lem:outputs}
  Let $L_a$ and $L_b$ be layers on $n$ channels such that
  $\outputs(F_n;L_b) \subseteq \pi(\outputs(F_n;L_a))$ for
  some permutation~$\pi$. If there is a sorting network $F_n;L_a;C$ of
  depth $d$, then there is also a sorting network $F_n;L_b;C'$ of
  depth $d$.
\end{lemma}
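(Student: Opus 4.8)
The plan is to work at the level of output sets and to reduce the statement to a single application of untangling. The starting point is the routine observation that $F_n;L;C$ is a sorting network if and only if $C$ sorts every vector of $\outputs(F_n;L)$; so the hypothesis that $F_n;L_a;C$ has depth $d$ says exactly that the depth-$(d-2)$ network $C$ sorts $\outputs(F_n;L_a)$. The one structural fact I would isolate first is the equivariance of comparator networks under channel permutations: for any (generalized) network $C$ and any permutation $\pi$ one has $\pi(C)(\pi(\bar x)) = \pi(C(\bar x))$, where $\pi(\bar x)$ is $\bar x$ with its positions relabelled by $\pi$. This is immediate layer by layer from the definition of $\pi(L)$, and it is the bridge that moves the permutation of the hypothesis through the suffix $C$.

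With these two facts in hand I would build a single candidate generalized network and then untangle it. Define $M = F_n;L_b;\pi(C)$. I claim $M$ sends \emph{every} input to a permuted sorted vector. Indeed, for an input $\bar z$ the prefix produces some $\bar y = (F_n;L_b)(\bar z)\in\outputs(F_n;L_b)$, and by the containment hypothesis $\bar y = \pi(\bar x)$ for some $\bar x\in\outputs(F_n;L_a)$; equivariance then gives $M(\bar z)=\pi(C)(\pi(\bar x))=\pi(C(\bar x))=\pi(\bar s)$, where $\bar s = C(\bar x)$ is sorted because $C$ sorts $\outputs(F_n;L_a)$. Thus $M$ has depth $d$ and sorts every input ``into the order $\pi$''. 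The decisive feature of this particular $M$ is that its first two layers $F_n;L_b$ are genuine (ascending) layers, so all of its \emph{generalized} comparators lie in the suffix $\pi(C)$.

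It remains to pass from $M$ to a standard sorting network with prefix $F_n;L_b$. Here I would invoke the untangling construction (Exercise~5.3.4.16 of~\cite{Knuth73}), which reverses generalized comparators one at a time, compensating each reversal by relabelling the channels downstream of it. Applied to $M$, the reversals touch only the suffix $\pi(C)$, and their compensating relabellings propagate rightwards toward the outputs; the already-standard prefix $F_n;L_b$ is therefore left intact, while the suffix is turned into a standard network $C'$ of the same depth. The output is a network $F_n;L_b;C'$ of depth $d$, and the claim is that it sorts in ascending order.

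The step I expect to be the main obstacle is precisely this last one. Knuth's exercise is phrased for generalized networks that already sort into \emph{ascending} order, whereas $M$ sorts into the fixed order $\pi$. What must be checked is the mild strengthening that untangling a network which sends every input to $\pi(\bar s)$ (with $\bar s$ sorted) yields a genuinely ascending sorting network of the same depth — equivalently, that the permutation accumulated by the compensating relabellings is exactly $\pi$, so that it cancels the leading $\pi$ in $\pi(\bar s)$. Because the standard prefix $F_n;L_b$ contributes no reversals, this accumulated permutation is generated entirely within the suffix, which is what simultaneously forces the final network to be ascending-sorting and keeps its prefix equal to $F_n;L_b$. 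Establishing this refinement of the untangling lemma, and ruling out that a standard prefix could leave a residual output permutation, is the crux; everything else is bookkeeping with the equivariance identity and the output-set reformulation.
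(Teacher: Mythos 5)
Your overall architecture is the right one --- and, for the record, it is essentially the argument of Bundala and Z\'avodn\'y~\cite{DBLP:conf/lata/BundalaZ14}, to which the paper defers for the proof of this lemma. The reformulation ($F_n;L;C$ sorts iff $C$ sorts $\outputs(F_n;L)$), the equivariance identity $\pi(C)(\pi(\bar x))=\pi(C(\bar x))$, the construction $M=F_n;L_b;\pi(C)$ whose every output is $\pi(\bar s)$ with $\bar s$ sorted, and the observation that untangling leaves the already-standard prefix $F_n;L_b$ intact (each untangling step applies a transposition only from the first descending comparator onward, and all descending comparators lie in $\pi(C)$) are all correct.

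The problem is the step you yourself flag as the crux: you leave it unproven, and the justification you sketch for it is not valid. That ``the accumulated permutation is generated entirely within the suffix'' keeps the prefix intact, but it says nothing about \emph{which} permutation the compensating transpositions compose to, so it cannot by itself force that permutation to cancel $\pi$. The missing argument is short and is the only non-bookkeeping point of the proof, so it must be supplied. The untangling construction guarantees that the resulting standard network $N^\ast=F_n;L_b;C'$ satisfies $N^\ast(\bar z)=\sigma(M(\bar z))=(\sigma\pi)(\bar s_{\bar z})$ for every input $\bar z$, where $\sigma$ is the composition of the introduced transpositions and $\bar s_{\bar z}$ is the sorted version of $\bar z$ (each single step $N_1;c;N_2\mapsto N_1;\tau(c);\tau(N_2)$ computes $\tau\circ(N_1;c;N_2)$, because a comparator is symmetric in its two channels and $\tau$ commutes past $N_2$ by equivariance). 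Now use the fact that a \emph{standard} comparator network maps every ascending-sorted input to itself, since a comparator $(i,j)$ with $i<j$ never exchanges the values of an ascending vector. Feeding the sorted vectors $0^k1^{n-k}$, $0\le k\le n$, into $N^\ast$ gives $(\sigma\pi)(0^k1^{n-k})=0^k1^{n-k}$ for every $k$; a permutation fixing all these vectors preserves every prefix $\{1,\dots,k\}$ of the channel set and is therefore the identity. Hence $\sigma\pi=\mathrm{id}$ (in particular $\sigma=\pi^{-1}$, not $\pi$ as you wrote), so $N^\ast(\bar z)=\bar s_{\bar z}$ for all $\bar z$, and $F_n;L_b;C'$ is a sorting network of depth $d$. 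With this paragraph added, your proof is complete and coincides with the cited one.
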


\section{Saturation}

Bundala and Z{\'a}vodn{\'y}
introduce, in~\cite{DBLP:conf/lata/BundalaZ14}, the notion of a
saturated layer.
We call a comparator network saturated if its last layer is saturated.
The motivation is that, usually, adding a comparator to a network
decreases the set of its possible outputs, but not always. When the
network is saturated, adding a comparator to its last layer does not
decrease the set of its possible outputs.
This means that, when seeking a sufficient set of two layer networks
with which to search for depth-optimal sorting networks, one can
consider only saturated ones.
The definition of saturation in~\cite{DBLP:conf/lata/BundalaZ14} is
syntactic. In this section, we propose a semantic characterization,
and prove a syntactic criterion which is stronger than the one
proposed therein.
This means that we need to consider fewer two-layer networks.

\begin{definition}
  A comparator network $C$ is \emph{redundant} if there exists a
  network $C'$ obtained from $C$ by removing a comparator such that
  $\outputs(C')=\outputs(C)$.  A network $C$ is \emph{saturated} if it
  is non-redundant and every network $C'$ obtained by adding a
  comparator to the last layer of $C$ satisfies $\outputs(C')\not\subseteq\outputs(C)$.
\end{definition}

Parberry~\cite{DBLP:journals/mst/Parberry91} shows that the first
layer of a minimal-depth sorting network on $n$ channels can always be assumed to contain
$\left\lfloor\frac n2\right\rfloor$ comparators.  Also,
any comparator network that contains the same comparator at consecutive
layers is redundant.

\begin{theorem}
  \label{thm:sat-char}
  Let $C$ be a saturated two-layer network.  Then $C$ contains none of the
  following two-layer patterns.\medskip

  \noindent\hfill$\phantom{c}(1)$~\raisebox{-\height/2}{\makebox{\includegraphics{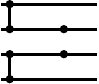}}}
  $(2a)$~\raisebox{-\height/2}{\makebox{\includegraphics{saturated-2a.pdf}}}
  $(2b)$~\raisebox{-\height/2}{\makebox{\includegraphics{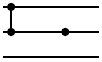}}}\hspace*\fill\\[\smallskipamount]
  
  \noindent\hfill$(2c)$~\raisebox{-\height/2}{\makebox{\includegraphics{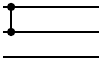}}}
  $(3a)$~\raisebox{-\height/2}{\makebox{\includegraphics{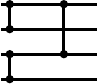}}}
  $(3b)$~\raisebox{-\height/2}{\makebox{\includegraphics{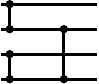}}}\hspace*\fill\\
\end{theorem}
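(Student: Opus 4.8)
The plan is to prove the contrapositive: for each of the six forbidden patterns, I would show that if a two-layer network $C$ contains that pattern, then $C$ is \emph{not} saturated, i.e.\ it is either redundant or admits a comparator addition to its last layer that does not enlarge the output set. Since saturation is defined semantically via $\outputs$, the natural tool is Lemma~\ref{lem:outputs} together with direct reasoning about how the zero-one inputs propagate through the two layers. I would handle the patterns in groups according to the structural feature they expose.

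First I would dispose of pattern~$(1)$ and the patterns $(3a)$, $(3b)$. Pattern~$(1)$ encodes a comparator repeated at both layers on the same pair of channels; as the excerpt already notes, any network containing the same comparator at consecutive layers is redundant, so removing the second copy leaves $\outputs$ unchanged, violating non-redundancy. For $(3a)$ and $(3b)$ I would argue that the pattern describes a local configuration in which one of the two comparators can be deleted without changing the output: after the first layer the relevant channels already carry values in a fixed min/max relationship, so the second-layer comparator acting on them is forced and hence redundant. The key computation here is just to track, for each of the four combinations of the incoming bits, that the disputed comparator never actually reorders anything.

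The substantive cases are $(2a)$, $(2b)$, and $(2c)$, which are the patterns asserting the existence of a free channel or an ``unsaturated'' min/max channel in the second layer. Here the strategy is the reverse: rather than removing a comparator I would \emph{add} one. Using the classification of channels into min/max/free relative to $F_n$ (the first layer), I would show that the pattern always leaves two channels on which a new second-layer comparator can be placed, and that this new comparator does not shrink the output set — equivalently, that every output already respects the order this comparator would impose. To make this precise I would invoke Lemma~\ref{lem:outputs} (or the saturation definition directly) by exhibiting, for the augmented network $C'$, the inclusion $\outputs(C')\subseteq\outputs(C)$ that contradicts saturation. The bit-propagation bookkeeping is routine once the min/max/free status of the channels in the pattern is fixed.

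The main obstacle I anticipate is the combinatorial case analysis in $(2a)$--$(2c)$: each pattern constrains only part of the two layers, so I must verify that the output-preserving comparator addition can be carried out \emph{regardless} of how the unspecified comparators of $C$ are filled in around the pattern, and in particular that the free/min/max bookkeeping does not interact badly with external comparators of the pattern. A clean way to control this is to reduce to the $0/1$ inputs whose restriction to the pattern's channels witnesses the critical behavior, and to argue that the added comparator is ``semantically present'' — that on every $0/1$ input the two values it would compare already emerge in sorted order after layer two. Establishing this invariant uniformly across the three sub-patterns, while keeping the min/max analysis independent of the surrounding network, is where the real work lies; the rest is a matter of assembling the six cases into a single proof by contraposition.
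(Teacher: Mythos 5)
Your contrapositive framing (pattern present $\Rightarrow$ not saturated) is the right one, but the mechanisms you assign to the cases do not match the actual patterns, and for the hardest cases they are mathematically wrong. The decisive gap is in $(3a)$/$(3b)$. Pattern $(3a)$ consists of two first-layer comparators $(a,b)$ and $(c,d)$ whose \emph{min} outputs $a$ and $c$ are joined by a second-layer comparator, while the max outputs $b$ and $d$ are unused in layer~2 (it is net~$d)$ of Figure~\ref{fig:4wire}). Your claim that the relevant channels ``already carry values in a fixed min/max relationship, so the second-layer comparator \ldots is forced and hence redundant'' is false: after layer~1, channels $a$ and $c$ carry $\min(x_a,x_b)$ and $\min(x_c,x_d)$, which can arrive in either order, so the comparator genuinely reorders. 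Concretely, on input $1100$ net~$d)$ outputs $0110$, an output unobtainable without that comparator; the paper's own listing in Section~\ref{sec:example} shows $\outputs(d)\neq\outputs(F_4)$, and net~$d)$ is classified there as non-saturated, not redundant. So no deletion argument can succeed; the paper instead \emph{adds} a comparator between the two unused max channels $b$ and $d$ and proves $\outputs(C')\subseteq\outputs(C)$ by an input-swap argument: writing $m_k$ for the value on channel $k$ after layer~1, the output of $C'$ on input $i$ equals the output of $C$ on $i$ itself when $m_b\le m_d$, and on the input obtained by exchanging $i_a\leftrightarrow i_c$ and $i_b\leftrightarrow i_d$ otherwise. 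This addition-plus-swap idea is exactly the paper's new contribution (the two cases missing from Bundala and Z\'avodn\'y), and it is absent from your proposal. Pattern $(1)$ is likewise misidentified: it is not a repeated comparator (repeated comparators fall under the redundancy remark preceding the theorem and under ``redundant nets'' in Figure~\ref{fig:4wire}), but a min channel and a max channel both unused at layer~2 --- which is how case~$(1)$ is invoked in the proof of Theorem~\ref{thm:sat-thm}, where the network is already assumed non-redundant; it too is handled by adding a comparator, not deleting one.

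For $(2a)$--$(2c)$ you choose the correct mechanism (comparator addition), but the invariant you propose --- that ``on every $0/1$ input the two values it would compare already emerge in sorted order after layer two,'' i.e.\ that the new comparator never fires and hence $\outputs(C')=\outputs(C)$ --- is also false. A free channel carries an arbitrary input bit through both layers, so it can be out of order with the unused min or max channel it is joined to (a free channel carrying $1$ against a min channel carrying $0$, say). The inclusion $\outputs(C')\subseteq\outputs(C)$ is in general proper, and it is proved not by inertness but by a preimage argument: whenever the added comparator fires, the output it produces is recovered as an output of $C$ on a \emph{modified} input (flip the bit on the free channel and adjust the inputs of the affected first-layer pair while keeping its partner's value), with the orientation of the added comparator chosen according to the min/max/free status of the two channels, possibly as a generalized comparator that is then untangled. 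That subset-via-preimage reasoning, not inertness, is the uniform invariant behind all six cases, so repairing your proof requires replacing its central claims rather than tightening them.
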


\begin{proof}
  Although this formulation is more general, the proof of case~$(1)$ is the
  same as the first case of the proof of Lemma~8 of~\cite{DBLP:conf/lata/BundalaZ14}, and
  the proof of cases~$(2a)$, $(2b)$ and~$(2c)$ is the same as the second case of the same
  proof.

  For case~$(3a)$, assume that $C$ includes the given pattern and let the channels
  corresponding to those in the pattern be $a$, $b$, $c$ and~$d$.  Add a
  comparator between channels $b$ and $d$ to obtain a network $C'$ that
  includes the following pattern.\medskip

  \hfill\includegraphics{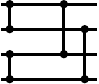}\hspace*\fill\medskip
  
  For a given input $i$ of $C'$, let $i_k$, $m_k$ and $o_k$ denote the values on
  channel $k$ respectively at input, after layer~$1$, and at output.
  Exchanging the values of $i_a$ with $i_c$ and $i_b$ with $i_d$ has the effect
  of exchanging $m_a$ with $m_c$ and $m_b$ with $m_d$.
  Then the output $o$ can be obtained as an output of $C$:
  \begin{itemize}
  \item if $m_b\leq m_d$, then $o$ is the output of $C$ corresponding to input
    $i$;
  \item if $m_d<m_b$, then $o$ is the output of $C$ corresponding to input $i'$
    obtained from $i$ by permuting $i_a$ with $i_c$, $i_b$ with $i_d$, and
    maintaining the value on all other channels.
  \end{itemize}
  Therefore $C$ is not saturated.

  For case~$(3b)$ the construction is the same, and the thesis follows by
  comparing $m_a$ with $m_c$.
\end{proof}

As it turns out, these are actually \emph{all} of the patterns that make
a comparator network with first layer $F_n$ non-saturated. We formalize
this observation in the following theorem.
\begin{theorem}
  \label{thm:sat-thm}
  If $C$ is a non-redundant two-layer network on $n$ channels with first
  layer $F_n$ containing none of the
  patterns in Theorem~\ref{thm:sat-char}, then $C$ is saturated.
\end{theorem}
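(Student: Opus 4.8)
The plan is to prove the contrapositive: assume $C$ is a non-redundant two-layer network with first layer $F_n$ that is \emph{not} saturated, and show that $C$ must contain one of the six patterns listed in Theorem~\ref{thm:sat-char}. Since $C$ is non-redundant by hypothesis, the failure of saturation must come from the second clause of the definition: there is a comparator we can add to the last (second) layer of $C$ yielding $C'$ with $\outputs(C')\subseteq\outputs(C)$. The strategy is to fix such an addable comparator, say between channels $p$ and $q$, and analyze the local structure of $C$ around $p$ and $q$ in the two layers. Because the first layer is exactly $F_n$, each of $p$ and $q$ is either a ``min'', a ``max'', or a ``free'' channel in the sense defined earlier, and in the second layer each may be idle, or comparing with a partner. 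A finite case analysis over these local configurations should exhaust the possibilities, and the claim is that every configuration admitting a saturating addition is precisely one of the six forbidden patterns.

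First I would set up the semantic test for addability cleanly. Adding the comparator $(p,q)$ to the second layer preserves the output set exactly when, for every input $\bar x$, the values arriving at channels $p$ and $q$ after layer~$2$ are already in order (value at $p$ not larger than value at $q$) for that input, or more precisely when the sorted and unsorted versions produce the same multiset of reachable outputs. Working with $\{0,1\}^n$ inputs via the zero-one principle, I would characterize when $\outputs(C')\subseteq\outputs(C)$ in terms of the post-layer-one values $m_p,m_q$ feeding into the second layer, mirroring the $m_k$ notation used in the proof of Theorem~\ref{thm:sat-char}. The key reduction is that whether $(p,q)$ is addable depends only on the comparators of $C$ touching the four channels incident to $p$, $q$ in layer~$1$ (their $F_n$-partners) and to $p$, $q$ in layer~$2$, i.e.\ on a bounded neighborhood, which is exactly what a ``pattern'' records.

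The core of the argument is then the case analysis on the second-layer status of $p$ and $q$. I would organize it by how many of $p,q$ are free versus paired in $F_n$, and within each case by whether $p,q$ are idle or already comparing in layer~$2$. The expectation is that the configurations where $(p,q)$ is genuinely addable line up as follows: patterns $(1)$ and $(2a)$--$(2c)$ correspond to the situations already handled in Lemma~8 of~\cite{DBLP:conf/lata/BundalaZ14} (reused in Theorem~\ref{thm:sat-char}), where free channels or appropriately oriented min/max channels allow a comparator to be appended without shrinking the output set; patterns $(3a)$ and $(3b)$ correspond to the crossing configurations on four channels analyzed explicitly in that proof, where a value-exchange symmetry on the inputs shows the new output was already reachable. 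Conversely — and this is the direction Theorem~\ref{thm:sat-char} supplies — if none of the six patterns is present, then for every candidate comparator $(p,q)$ one exhibits an input whose $C'$-output is not a $C$-output, so no addition is saturating and $C$ is saturated.

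The hard part will be verifying that the six patterns are genuinely \emph{exhaustive}: that every local configuration permitting a saturating comparator addition reduces, after accounting for the $F_n$ structure and for symmetry under permuting channels (Lemma~1), to one of the listed patterns with no gaps. Establishing completeness requires being careful that the ``bounded neighborhood'' reduction is valid — that comparators outside the neighborhood of $p,q$ cannot affect whether $(p,q)$ is addable — and that the enumeration of min/max/free types and second-layer adjacencies is truly complete up to the symmetries already available. I would likely argue the reduction by noting that the values $m_p,m_q$ and the comparison behavior on $\{p,q\}$ depend only on the incident comparators, so the global network factors through the local pattern; the remaining finite enumeration is then routine but must be laid out explicitly to guarantee nothing is missed.
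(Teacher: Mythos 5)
Your contrapositive framing is legitimate, and your opening reductions track the paper's own: the paper likewise fixes two channels unused in layer~2, uses the absence of patterns $(2a)$--$(2c)$ to conclude both are used in layer~1, the absence of pattern~$(1)$ to conclude they are both min- or both max-channels, and non-redundancy together with the exclusion of $(3a)$/$(3b)$ to constrain how their layer-1 partners appear in layer~2. (Incidentally, part of your case split is vacuous: a comparator can only be added on channels that are \emph{idle} in layer~2, so ``already comparing in layer~2'' is not a case for $p,q$ themselves, only for their layer-1 partners.) The genuine gap is that everything after this point --- which is the actual mathematical content of the proof --- is promised but never supplied. To finish, in either direction of the implication, one must show that in each remaining pattern-free configuration the added comparator produces an output that $C$ cannot produce. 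The paper does this with explicit witness strings ($1^k11001^m$ or $0^k11001^m$, depending on whether the partners are layer-2 min- or max-channels) and an argument tracing where the $0$s must land in any output of $C$; the case where both partners are layer-2 max-channels needs a genuinely finer argument, in which the target output $1^{j-1}01^{k-j}10111^m$ forces channel $a$ to be connected to channel $j$ in layer~2, a contradiction. None of these witnesses appear in your proposal, and calling the enumeration ``routine'' underestimates it: the enumeration of cases is routine, but each case needs a bespoke witness construction, and that is the proof.

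Worse, the one sentence in which you address this step is circular: you write that ``if none of the six patterns is present, then for every candidate comparator $(p,q)$ one exhibits an input whose $C'$-output is not a $C$-output, so no addition is saturating and $C$ is saturated,'' and you attribute this to Theorem~\ref{thm:sat-char}. That theorem states the converse direction (saturated $\Rightarrow$ pattern-free, equivalently pattern $\Rightarrow$ some output-preserving addition exists); the implication you quote is precisely Theorem~\ref{thm:sat-thm}, the statement under proof. Relatedly, your justification of the ``bounded neighborhood'' reduction does not suffice as stated: observing that $m_p,m_q$ depend only on incident comparators localizes the map $x \mapsto C'(x)$, but $\outputs(C')\subseteq\outputs(C)$ quantifies over \emph{all} inputs of $C$, so refuting membership of a candidate output in $\outputs(C)$ is a global claim about the whole network. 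The paper makes it local by choosing witness inputs that are constant outside the neighborhood, so that the global reachability question collapses to reasoning about the few comparators that can move the $0$s; that choice of witnesses is exactly the missing idea in your plan.
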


\begin{proof}
  Let $C$ be a non-redundant two-layer comparator network, and assume that the second layer of $C$
  has at least two unused channels (otherwise there is nothing to prove).  If one
  of these channels were unused at layer~$1$, then the network would contain the
  pattern~$(2a)$, $(2b)$ or~$(2c)$. Thus, by Theorem~\ref{thm:sat-char}
  necessarily the two channels are connected at layer~$1$. Again from the same theorem,
  we know that they must be both min-channels or both max-channels (otherwise case~$(1)$ applies)
  and the channels they are connected to at layer~$1$ cannot be connected at
  layer~$2$, otherwise the network would be redundant.

  There are eight different cases to consider.  We
  detail the cases where the two unused channels are max channels. Assume that the
  four relevant channels are adjacent. This does not lose generality, since a first-layer preserving permutation
  can always be applied to $C$ to make this hold. Label the channels $a$, $b$, $c$
  and $d$ from top to bottom (so $(a,b)$ and $(c,d)$ are comparators at
  layer~$1$ and channels~$b$ and~$d$ are unused at layer~$2$).
  Let $k$ be the number of channels above~$a$ and $m$ be the number of channels below~$d$.
  Adding a comparator to $C$ yields $C'$ where $(b,d)$ is a comparator at
  layer~$2$.
  The four possibilities depend on whether channels~$a$ and~$c$ are min- or
  max-channels,
  and are represented in Figure~\ref{fig:sat-thm}.

  \begin{figure}[t]
    \hfill
    (i)~\raisebox{-\height/2}{\makebox{\includegraphics{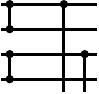}}}
    \hfill
    (ii)~\raisebox{-\height/2}{\makebox{\includegraphics{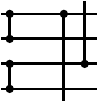}}}
    \hfill
    (iii)~\raisebox{-\height/2}{\makebox{\includegraphics{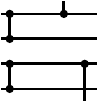}}}
    \hfill
    (iv)~\raisebox{-\height/2}{\makebox{\includegraphics{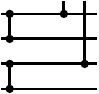}}}
    \hspace*\fill

    \caption{Possible cases for channels~$a$ and~$c$ in the proof of
      Theorem~\ref{thm:sat-thm}.  To obtain $C'$, add a comparator between channels~$b$ and $d$.}
    \label{fig:sat-thm}
  \end{figure}

  \begin{itemize}
  \item Figure~\ref{fig:sat-thm}~(i): $a$ and $c$ are min-channels at layer~$2$.

    Consider the input string $1^k11001^m$.  This is transformed to
    $1^k10011^m$ by $C'$, so $1^k10011^m\in\outputs(C')$.  We now show that
    $1^k10011^m\not\in\outputs(C)$.  In order to obtain the $0$ on channel~$b$, the
    input string would necessarily have a~$0$ on channel~$a$ because of the
    comparator~$(a,b)$ at layer~$1$.  But then the output would also have
    a~$0$ on channel~$a$, hence it could not be $1^k10011^m$.

  \item Figure~\ref{fig:sat-thm}~(ii): $a$ is a min-channel at layer~$2$, and $c$ is a max-channel.

    The argument is similar, but using the
    input string $0^k11001^m$.  This is transformed to $0^k10011^m$ by $C'$, so
    $0^k10011^m\in\outputs(C')$, and the same reasoning as above shows that
    $0^k10011^m\not\in\outputs(C)$.

  \item Figure~\ref{fig:sat-thm}~(iii): $a$ is a max-channel at layer~$2$, and $c$ is a min-channel.

    Consider again the input string $1^k11001^m$.  As before, this is
    transformed to $1^k10011^m$ by $C'$, so $1^k10011^m\in\outputs(C')$, and we
    show that $1^k10011^m\not\in\outputs(C)$.  As before, to obtain the $0$ on
    channel~$b$ the input string would necessarily have a~$0$ on channel~$a$ because
    of the comparator~$(a,b)$ at layer~$1$.  Now this $0$ is propagated upwards
    by the second-layer comparator at~$a$, which means that the output has
    a~$0$ on one of the first $k$ channels, hence it cannot be $1^k10011^m$.

  \item Figure~\ref{fig:sat-thm}~(iv): $a$ and~$c$ are both max-channels at layer~$2$.

    The reasoning is a bit more involved.
    Consider once more the input string $1^k11001^m$.  Since channel~$c$ is a
    second-layer max-channel connected w.l.o.g.~to a channel $j \leq k$, the output produced by~$C'$ is
    $1^{j-1}01^{k-j}10111^m$. In
    order to obtain this output with network~$C$, as before it is necessary to
    have inputs $0$ on channels~$a$ and~$b$; but since there are only two $0$s in
    the output, this means that channel~$a$ must also be connected to channel $j$
    on layer~$2$, which is impossible.
  \end{itemize}

\noindent  The cases where $a$ and $c$ are the unused (min) channels are similar.
\end{proof}

We believe the following generalization to hold.

\begin{conjecture}\label{conjecture}
  If the two-layer networks $C_1$ and $C_2$ on $n$ channels are both saturated and
  non-equivalent, then $\outputs(C_1)\not\subseteq\outputs(C_2)$.
\end{conjecture}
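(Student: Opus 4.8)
The plan is to prove the contrapositive of the conjecture. That is, I want to show that if $C_1$ and $C_2$ are saturated two-layer networks on $n$ channels with $\outputs(C_1)\subseteq\outputs(C_2)$, then $C_1\approx C_2$. The intuition is that saturation, as formalized above, pins down the ``shape'' of a two-layer network quite rigidly: by Theorem~\ref{thm:sat-thm}, a non-redundant network with first layer $F_n$ is saturated precisely when it avoids a short list of local patterns, and these forbidden patterns are exactly the situations in which one could shrink the output set by a syntactic modification. So the working hypothesis is that the output set of a saturated network determines its comparator structure up to a first-layer-preserving permutation and untangling.

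First I would reduce to a canonical first layer. By Parberry's result (and Lemma~1 above), both $C_1$ and $C_2$ may be assumed to have first layer $F_n$, so write $C_1 = F_n;L_1$ and $C_2 = F_n;L_2$ where $L_1,L_2$ are second layers. The permutation~$\pi$ witnessing $\outputs(C_1)\subseteq\pi^{-1}(\outputs(C_2))$ can be taken, via Lemma~1, to fix~$F_n$, so $\pi$ only permutes within the min/max/free structure induced by the first layer. Next I would extract combinatorial invariants of a saturated network from its output set alone: for instance, the number of sorted outputs, the multiset of Hamming weights, and finer statistics such as which pairs of channels are ``comparable'' in every output. The goal of this step is to recover, purely semantically, the classification of channels into min-channels, max-channels, and free channels used in the proof of Theorem~\ref{thm:sat-thm}, and then to recover the pairing of second-layer comparators. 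Since saturation forbids patterns $(1)$, $(2a)$--$(2c)$ and $(3a)$--$(3b)$, the second layer of a saturated network has a very constrained form (each unused second-layer channel must be paired at layer~$1$ with another unused channel of the same min/max type), and I would argue that these structural constraints are detectable from $\outputs$.

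The key step, once the channel classification is recovered, is to show that two saturated networks with the same output set must have the same second-layer comparators up to relabelling within each class. Here I would lean on the semantic definition of saturation: if $L_1$ and $L_2$ differ in some essential way, I would construct an explicit $0/1$ input string — in the style of the strings $1^k11001^m$ and $0^k10011^m$ used in the proof of Theorem~\ref{thm:sat-thm} — that lies in one output set but not the other, contradicting the inclusion. The patterns forbidden by saturation are exactly what guarantees that such a distinguishing string exists whenever the two second layers are genuinely inequivalent.

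The hard part will be the last step: turning ``the two second layers differ'' into a concrete distinguishing input, uniformly across all the ways two saturated layers can differ. Theorem~\ref{thm:sat-thm} handles a fixed, finite set of local discrepancies (the eight cases), but here the two networks may differ globally in the matching of second-layer comparators, and an inclusion of output sets is a weaker, more flexible hypothesis than the equality one gets from adding a single comparator. I expect the proof will require either a careful induction on the number of second-layer comparators, peeling off one matched comparator at a time while preserving saturation, or a direct characterization of $\outputs(F_n;L)$ as a function of the comparator matching that can be inverted. Making the inclusion $\outputs(C_1)\subseteq\pi(\outputs(C_2))$ yield equivalence rather than merely a one-sided structural constraint is where I anticipate the main difficulty, and it is plausibly the reason this statement is left as a conjecture rather than a theorem.
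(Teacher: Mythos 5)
You should first note that the paper does \emph{not} prove this statement: it is stated as Conjecture~\ref{conjecture}, explicitly left open (``the general case remains open''), and only verified experimentally for $n\leq 15$. Particular cases follow from Theorem~\ref{thm:sat-thm}, but that theorem only handles the situation where one network is obtained from the other by adding a single comparator to the second layer --- a purely local modification. So there is no proof in the paper against which your proposal could be favourably compared; the question is whether your sketch closes the gap, and it does not.

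The gap in your proposal is exactly the step you flag at the end, and flagging it does not discharge it. Your plan has three stages: (a) normalize both networks to first layer $F_n$, (b) recover the min/max/free channel classification and the second-layer matching from $\outputs(C_i)$ alone, and (c) show that any essential difference between the two second layers yields a distinguishing $0/1$ input. Stage (b) is asserted, not proved: nothing in the paper (or in your sketch) shows that the output set of a saturated network determines its comparator structure, and this claim is essentially equivalent to the conjecture itself --- so as written, the argument is circular. Stage (c) inherits the same problem: the distinguishing strings $1^k11001^m$, $0^k10011^m$ in the proof of Theorem~\ref{thm:sat-thm} work because $C$ and $C'$ there differ by exactly one comparator on channels whose local configuration falls into one of eight enumerated cases; when $C_1$ and $C_2$ have globally different matchings (different cycle structures in the word representation, say $\sent{122121}$ versus $\sent{121221}$-type components distributed differently across the channels), no finite case analysis of local patterns is available, and you give no mechanism for producing the witness string uniformly. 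A further technical wrinkle: the conjecture's hypothesis is the bare inclusion $\outputs(C_1)\subseteq\outputs(C_2)$ with no permutation, whereas your reduction introduces a permutation $\pi$ via Lemma~1; Lemma~1 concerns extendability to sorting networks, not output-set inclusions, so even the normalization step (a) needs an argument that the inclusion hypothesis survives the change of first layer and the untangling. In short, your proposal is a reasonable research plan that correctly identifies where the difficulty lies, but the conjecture remains exactly as open after your sketch as before it.
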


Particular cases of Conjecture~\ref{conjecture} are implied by
Theorem~\ref{thm:sat-thm}, but the general case remains open. The
conjecture has been verified experimentally for $n\leq 15$.

\section{Case studies: $n=4$ and $n=5$}
\label{sec:example}

This section provides a detailed analysis for the cases of
four-channel two-layer networks with first layer $F_4$ and
five-channel two-layer networks with first layer $F_5$.
Consider the following strategy to enumerate all possible
second layers: channel $1$ may be connected to channels~$2$, $3$
or~$4$, or may be unused; if channel~$1$ is connected to channel~$2$,
then channel~$3$ may be connected to channel~$4$ or may be unused;
etc.  With this strategy, the ten networks in Figure~\ref{fig:4wire} are generated in
the order $\mathit{abidjefgch}$.

\begin{figure}
\smallskip\emph{Redundant nets:}\smallskip

\fbox{%
  $a)$~\raisebox{-\height/2}{\includegraphics{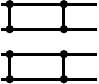}}
}
\fbox{%
  $b)$~\raisebox{-\height/2}{\includegraphics{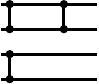}}
  $c)$~\raisebox{-\height/2}{\includegraphics{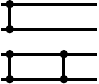}}
}

\smallskip\emph{Non-saturated nets:}\smallskip

\fbox{%
  $d)$~\raisebox{-\height/2}{\includegraphics{4-13.pdf}}
}
\fbox{%
  $e)$~\raisebox{-\height/2}{\includegraphics{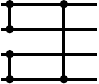}}
  $f)$~\raisebox{-\height/2}{\includegraphics{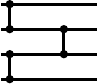}}
}

\smallskip
\fbox{%
  $g)$~\raisebox{-\height/2}{\includegraphics{4-24.pdf}}
}
\fbox{%
  $h)$~\raisebox{-\height/2}{\includegraphics{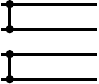}}
}

\smallskip\emph{Saturated nets:}\smallskip

\fbox{%
  $i)$~\raisebox{-\height/2}{\includegraphics{4-1324.pdf}}
}
\fbox{%
  $j)$~\raisebox{-\height/2}{\includegraphics{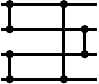}}
}\bigskip

\caption{The $10$ two-layer standard networks on four channels with the Parberry first layer $F_4$.}
\label{fig:4wire}
\end{figure}

The boxes around the networks represent classes of
equivalent networks.  There are only two non-trivial equivalence
classes.
The equivalence between nets $b)$ and $c)$ follows since
the permutation $(1\,3)(2\,4)$ transforms them into one another.  For
nets $e)$ and $f)$, applying the same permutation to $e)$ yields a net
that has a generalized comparator in layer~$2$; untangling it results in
$f)$.

The nets in the first row are all redundant, as they repeat a
comparator from the first layer; since the redundant comparators can
be removed without altering the set of outputs, they can be simplified to
net~$h)$.
The nets in the second row are not saturated; by
Theorem~\ref{thm:sat-char}, nets~$d)$ and~$g)$ are not saturated, and
their extension~$i)$ produces a subset of their outputs;
a similar situation arises with net~$e)$ vs net~$j)$, and net~$h)$
produces a superset of the outputs of both $i)$ and $j)$.  We detail the sets of
binary outputs for nets~$d)$, $g)$ and~$i)$, which correspond to
Case~$(3a)$ of Theorem~\ref{thm:sat-char}, one of the two cases missing
from the corresponding result in~\cite{DBLP:conf/lata/BundalaZ14}.
\[\begin{array}{rlcccccccccc}
\outputs(d) &= \{&0000,&0001,&0011,&0100,\\&&0101,&0110,&0111,&1111\phantom{,}&\} \\
\outputs(g) &= \{&0000,&0001,&0011,&0101,\\&&0111,&1001,&1101,&1111\phantom{,}&\} \\
\outputs(i) &= \{&0000,&0001,&0011,&0101,\\&&0111,&1111\phantom{,}&\}
\end{array}\]

For $n=5$ the situation is similar to $n=4$.  The generation algorithm from Section~\ref{sec:pathrep} produces the
26~two-layer networks in Figure~\ref{fig:5wire} in the order
$abdfqrshvtwixyckujzmenglop$.

\begin{figure}

\smallskip\emph{Redundant nets:}\smallskip

\fbox{%
  $a)$~\raisebox{-\height/2}{\includegraphics{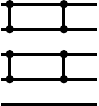}}
}
\fbox{%
  $b)$~\raisebox{-\height/2}{\includegraphics{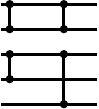}}
  $c)$~\raisebox{-\height/2}{\includegraphics{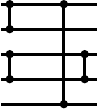}}
}
\fbox{%
  $d)$~\raisebox{-\height/2}{\includegraphics{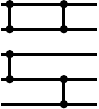}}
  $e)$~\raisebox{-\height/2}{\includegraphics{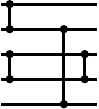}}
}

\smallskip
\fbox{%
  $f)$~\raisebox{-\height/2}{\includegraphics{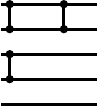}}
  $g)$~\raisebox{-\height/2}{\includegraphics{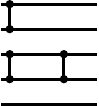}}
}

\smallskip\emph{Non-saturated nets:}\smallskip

\fbox{%
  $h)$~\raisebox{-\height/2}{\includegraphics{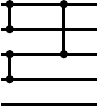}}
}
\fbox{%
  $i)$~\raisebox{-\height/2}{\includegraphics{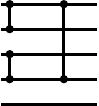}}
  $j)$~\raisebox{-\height/2}{\includegraphics{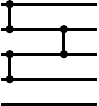}}
}
\fbox{%
  $k)$~\raisebox{-\height/2}{\includegraphics{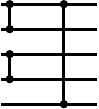}}
  $l)$~\raisebox{-\height/2}{\includegraphics{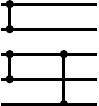}}
}

\smallskip
\fbox{%
  $m)$~\raisebox{-\height/2}{\includegraphics{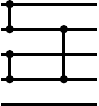}}
}
\fbox{%
  $n)$~\raisebox{-\height/2}{\includegraphics{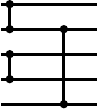}}
  $o)$~\raisebox{-\height/2}{\includegraphics{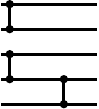}}
}
\fbox{%
  $p)$~\raisebox{-\height/2}{\includegraphics{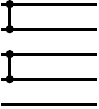}}
}

\smallskip\emph{Saturated nets:}\smallskip

\fbox{%
  $q)$~\raisebox{-\height/2}{\includegraphics{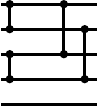}}
}
\fbox{%
  $r)$~\raisebox{-\height/2}{\includegraphics{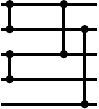}}
  $s)$~\raisebox{-\height/2}{\includegraphics{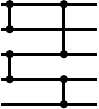}}
}
\fbox{%
  $t)$~\raisebox{-\height/2}{\includegraphics{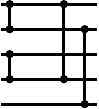}}
  $u)$~\raisebox{-\height/2}{\includegraphics{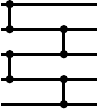}}
}

\smallskip
\fbox{%
  $v)$~\raisebox{-\height/2}{\includegraphics{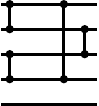}}
}
\fbox{%
  $w)$~\raisebox{-\height/2}{\includegraphics{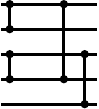}}
  $x)$~\raisebox{-\height/2}{\includegraphics{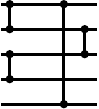}}
}
\fbox{%
  $y)$~\raisebox{-\height/2}{\includegraphics{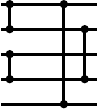}}
  $z)$~\raisebox{-\height/2}{\includegraphics{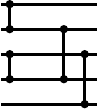}}
}

  \caption{The $26$ two-layer standard networks on five channels with the Parberry first layer $F_5$.}
  \label{fig:5wire}
\end{figure}

As before, the boxes identify the equivalence classes, which again can all be
obtained by means of the permutation $(1\,3)(2\,4)$ and eventually reversing any
generalized comparators at the second-layer.  The first set of networks is
redundant, while the second set is not saturated by Theorem~\ref{thm:sat-char},
and once again it can easily be verified that each network in this group
contains a set of outputs that is a proper superset of a network in the third
group.  Furthermore, only one element from each box in the third group needs to be
considered.

Following the notation in~\cite{DBLP:conf/lata/BundalaZ14}, we denote
the total number of two-layer networks on $n$ channels whose first
layer is $F_n$ by $|G_n|$; 
the number of non-equivalent such networks (up to permutation of
channels) by $|R(G_n)|$; 
and the corresponding values for saturated networks by $|S_n|$ and
$|R(S_n)|$.  
From these analyses, we obtain $|G_4|=10$, $|R(G_4)|=8$,
$|S_4|=|R(S_4)|=2$; and $|G_5|=26$, $|R(G_5)|=16$, $|S_5|=10$, and
$|R(S_5)|=6$.  The values for $|G_4|$, $|R(G_4)|$, $|G_5|$ and $|S_5|$
coincide with those in~\cite{DBLP:conf/lata/BundalaZ14}, whereas the
values we obtain for $|R(S_4)|$ and $|R(S_5)|$ coincide with those authors'
results after applying Lemma~\ref{lem:outputs} to eliminate representatives.
The difference in values in $|R(G_5)|$ and
$|R(S_5)|$ is probably due to an incomplete identification of the
equivalence classes (note that, for $n=5$, case~$(3)$ of
Theorem~\ref{thm:sat-char} is not necessary, so the notion of
saturated from \cite{DBLP:conf/lata/BundalaZ14} coincides with our
definition in the previous section).  The problem of computing the
equivalence classes \emph{efficiently} is the topic of the next
sections.

\section{Graph representation}

The results presented in \cite{DBLP:conf/lata/BundalaZ14} involve a
great deal of computational effort to identify permutations which
render various two-layer networks equivalent. Motivated by the
existence of sophisticated tools in the context of graph isomorphism,
we adopt a representation for comparator networks similar to the one
defined by Choi and Moon~\cite{DBLP:conf/gecco/ChoiM02}.
Let $C$ be a comparator network on $n$ channels.  The graph
representation of $C$ is a directed and labeled graph, $\GG(C)=(V,E)$
where each node in $V$ corresponds to a comparator in $C$ and
$E\subseteq V\times \{min,max\}\times V$.  Let $c(v)$ denote the
comparator corresponding to a node $v$. Then, $(u,\ell,v) \in E$ if
comparator $c(u)$ feeds into the comparator $c(v)$ in $C$ and the
label $\ell\in\{min,max\}$ indicates if the channel from $c(u)$ to
$c(v)$ is the min or the max output of $c(u)$.  Note that the
number of channels cannot be inferred from the graph representation,
as unused channels are not represented.

Each node has at most two in-edges and at most two out-edges. Nodes
with less than two in-edges represent comparators that are connected
to the input channels of the network. Similarly, nodes with less than two
out-edges represent comparators which are connected to the output
channels. As such, if the graph contains $k$ comparators, then the sum of
the in-degrees of the nodes and also the sum of the out-degrees of the
nodes is bounded by $2k-n$.

Clearly, graphs representing comparator networks are acyclic, and the
degrees of their vertices are bounded by~$4$.  There is a strong
relationship between equivalence of comparator networks and
isomorphism of their corresponding graphs. Choi and
Moon~\cite{DBLP:conf/gecco/ChoiM02} state the following proposition,
which implies that the comparator network equivalence problem is
polynomially reduced to the bounded-valence graph isomorphism problem.

\begin{proposition}\label{proposition:isomorphism}
  Let $C_1$ and $C_2$ be $n$-channel comparator networks. Then
\[ C_1\approx C_2 \Longleftrightarrow \GG(C_1) \approx \GG(C_2)\,.
\]
\end{proposition}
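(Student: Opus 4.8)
The plan is to prove the two implications separately, in both cases realizing the network equivalence as an explicit bijection on the node sets (the comparators) and checking that this bijection is a labeled-graph isomorphism, i.e.\ that it preserves both the edges and their $min$/$max$ labels.

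For the forward implication, suppose $C_1 \approx C_2$, so that $C_1$ is the untangling of $\pi(C_2)$ for some permutation $\pi$. I would first extend $\GG$ to generalized networks by the natural convention that every comparator $(i,j)$---whether $i<j$ or $i>j$---routes its minimum output along channel $i$ and its maximum output along channel $j$, so that the two outputs of a comparator are labeled intrinsically, independently of the channel indices. With this convention I would establish two invariance lemmas. First, applying a permutation $\pi$ only relabels channels and never alters which output of a comparator feeds which successor; hence the identity map on comparators is a labeled isomorphism $\GG(C_2) \cong \GG(\pi(C_2))$. Second, a single elementary untangling step---reversing one descending comparator and transposing the two channels on the part of the network to its right---moves the min- and max-outputs onto different channels but routes each of them into exactly the same successor comparator as before; so it too preserves $\GG$ up to the identity on comparators. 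Inducting on the number of reversed comparators gives $\GG(\pi(C_2)) \cong \GG(C_1)$, and composing the two isomorphisms yields $\GG(C_1) \cong \GG(C_2)$.

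For the converse I would start from a labeled-graph isomorphism $\psi\colon\GG(C_2)\to\GG(C_1)$ and reconstruct a channel permutation from the wiring it encodes. Acyclicity and the degree bound (each node has at most two in- and two out-edges) let me process the comparators in topological order: a node with fewer than two in-edges is attached to network-input channels, and each out-edge, together with its $min$/$max$ label, prescribes how the channel leaving that output continues into the next comparator. Tracing these labeled edges assembles the channels of $C_2$ into paths; mapping each path through $\psi$ produces the corresponding channel of $C_1$, and this correspondence is the candidate permutation $\pi$ (padded with the totally unused channels, which are invisible in the graph, to cover all $n$ channels). I would then verify that $\pi(C_2)$ untangles to $C_1$, so that $C_1 \approx C_2$.

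The hard part will be this reconstruction in the converse direction. The graph deliberately discards channel identities, so when a comparator has two incoming and two outgoing edges the isomorphism does not, on its face, tell us which incoming channel continues as the min-output and which as the max-output, nor does it pin down the absolute channel numbers or the placement of the unused channels. The crux is to show that any two networks obtained by resolving these choices differently are themselves related by a permutation followed by untangling---so that the reconstruction is well defined precisely up to $\approx$---and that the $min$/$max$ labels preserved by $\psi$ are exactly the data needed to guarantee that the reversals introduced along the way untangle back to the standard network $C_1$. Careful bookkeeping of the input/output ``loose ends'' and of the unused channels is what makes this step delicate rather than routine.
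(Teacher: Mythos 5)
A preliminary remark: the paper gives no proof of this proposition at all --- it is imported from Choi and Moon --- so there is no ``paper proof'' to match your approach against; your attempt must stand on its own. The forward direction of your argument does stand: with the convention that a (possibly generalized) comparator $(i,j)$ always emits its minimum on channel $i$ and its maximum on channel $j$, a channel permutation and a single untangling step each preserve the labeled graph via the identity map on nodes, and induction on the untangling steps gives $\GG(C_1)\approx\GG(C_2)$. That half is correct.

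The converse is where there is a genuine gap, and it is not merely that you defer the crux (``delicate rather than routine'') without resolving it --- no amount of bookkeeping can resolve it, because the right-to-left implication is false under the paper's definitions. The graph $\GG(C)$ records only the feeds-into relation between comparators, whereas $\approx$ (a permutation followed by untangling) preserves the assignment of comparators to layers: $\pi$ acts layer by layer, and untangling keeps the same size and depth. Concretely, let $C_1$ have layers $L_1=\{(1,2),(3,4)\}$, $L_2=\{(5,6)\}$ and let $C_2$ have layers $L_1=\{(1,2)\}$, $L_2=\{(3,4),(5,6)\}$, both on six channels. The three comparators in each network touch pairwise disjoint channels, so both graphs consist of three isolated vertices and are trivially isomorphic; but every $\pi(C_2)$ has exactly one comparator in its first layer, and untangling cannot move it, so $C_1\not\approx C_2$. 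Any proof of the converse must therefore add a hypothesis making the layer structure recoverable from the graph. This holds for the networks on which the paper actually uses the proposition --- two-layer networks whose first layer is maximal, where the first-layer comparators are precisely the nodes with no in-edges and every second-layer comparator has at least one --- and in that restricted setting your tracing/reconstruction plan can indeed be completed. For arbitrary $n$-channel networks, however, the implication you are trying to establish is not true, which is the concrete reason your ``hard part'' cannot be finished.
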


\begin{example}
  The sorting networks~$(a)$ and~$(b)$ from Page~\pageref{ex:sn} are
  represented by the following graphs,
  which can be seen to be isomorphic by mapping the vertices as
$a\mapsto v$, $b\mapsto u$, $c\mapsto w$, $d\mapsto x$, $e\mapsto y$ and
$f\mapsto z$.

{\small 
\[
\xymatrix@R-2em@C+1em{
  a \ar[r]^{min} \ar[rrdd]_(.2){max}
  & c \ar[r]^{min} \ar[rdd]^(.7){max}
  & d \ar[rd]^{max}
  \\
  &&& f
  \\
  b \ar[ruu]^(.3){max} \ar'[r]+0_{min}[rruu]
  && e \ar[ur]_{min}
}
\]
\[
\xymatrix@R-2em@C+1em{
  u \ar[rr]^{min} \ar[dr]_(.4){max}
  &&
  x \ar[rd]^{max}
  \\
  & w \ar[ru]_(.6){min} \ar[rd]^(.6){max}
  && z
  \\
  v \ar[ur]^(.4){min} \ar[rr]_{max}
  &&
  y \ar[ur]_{min}
}
\]
}
\end{example}

The graph isomorphism problem is one of a very small number of
problems belonging to NP, for which it is neither known that they are solvable in
polynomial time nor that they are NP-complete.  However, it is known that
the isomorphism of graphs of bounded valence (here: bounded degree) can be tested in
polynomial time~\cite{DBLP:journals/jcss/Luks82}, so the comparator
network equivalence problem can be efficiently solved.

An obvious approach for finding all two-layer prefixes modulo symmetry
is to generate all two-layer networks as demonstrated in
Section~\ref{sec:example}, and then apply graph isomorphism checking to find
canonical representatives of the equivalence classes. 
We evaluated this approach using the popular graph isomorphism tool
\verb!nauty!~\cite{DBLP:journals/jsc/McKayP14}, but found that the
exponential growth in the number of two-layer prefixes prevents this
approach from scaling.

Instead of a generate-and-test approach, in the next section we
present a scalable method for directly generating only one
representative two-layer prefix per equivalence class. Furthermore,
this approach also enables us to encode saturation as a syntactic
criterion in the generation process, i.e., to generate directly only
representatives of saturated two-layer prefixes.

\section{Path representation of two-layer networks}
\label{sec:pathrep}

In this section, we focus on two-layer networks where the first layer
is maximal (although not necessarily $F_n$).  These networks can be
uniquely represented in terms of the paths in their graph
representations. Furthermore, this representation can be read directly
from the network, and can be used to construct a canonical
representation of the network that completely characterizes the
equivalence classes in the generated graphs.  In the following, recall
that channels of a network are characterized as \emph{free},
\emph{min} or \emph{max} depending on the first layer.

\begin{definition}\label{def:path}
  A \emph{path} in a two-layer network $C$ is a sequence
  $\tuple{p_1p_2\ldots p_k}$ of distinct channels such that each pair
  of consecutive channels is connected by a comparator in $C$.
  The \emph{word} corresponding to $\tuple{p_1p_2\ldots p_k}$ is
  $\tuple{w_1w_2\ldots w_k}$, where $w_i$ is \sent{0}, \sent{1} or \sent{2}
  according to whether $p_i$ is the free channel, a min channel or a
  max channel, respectively.
\end{definition}

A path is maximal if it is a simple path (with no repeated nodes) that
cannot be extended (in either direction).  A network is connected if
its graph representation is connected.

\begin{definition}
  Let $C$ be a connected two-layer network on $n$ channels.  Then
  $\word(C)$ is defined as follows.

  \begin{description}
  \item[Head]%
    If $n$ is odd, then $\word(C)$ is the word
    corresponding to the maximal path in $C$ starting with
    the (unique) free channel. 
  \item[Stick]%
    If $n$ is even and $C$ has two channels not used
    in layer~$2$, then there are exactly two maximal paths in $C$ starting
    with a free channel (which are reverse to one another), and
    $\word(C)$ is the lexicographically smallest of
    the words corresponding these two paths.
  \item[Cycle]%
    If $n$ is even and all channels are used by a
    comparator in layer $2$, then $\word(C)$ is obtained by removing
    the last letter from the lexicographically
    smallest word corresponding to a maximal path in $C$ that
    begins with two channels connected in layer~$1$.
  \end{description}
\end{definition}

\begin{example}
  \label{ex:words}
  Below are three connected networks, ($a$), ($b$), and ($c$), with their
  maximal paths, pictured as ($a'$), ($b'$), and ($c'$), marked in bold.
  For instance, ($a'$) corresponds to the path $51243$.

  \noindent
  \hspace*\fill
  $(a)$~\raisebox{-\height/2}{\includegraphics{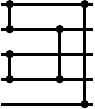}}
  \hfill
  $(b)$~\raisebox{-\height/2}{\includegraphics{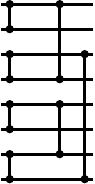}}
  \hfill
  $(c)$~\raisebox{-\height/2}{\includegraphics{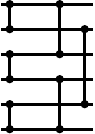}}
  \hspace*\fill\\[\smallskipamount]
  \hspace*\fill
  $(a')$~\raisebox{-\height/2}{\includegraphics{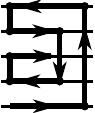}}
  \hfill
  $(b')$~\raisebox{-\height/2}{\includegraphics{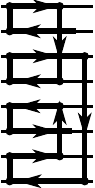}}
  \hfill
  $(c')$~\raisebox{-\height/2}{\includegraphics{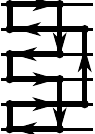}}
  \hspace*\fill\smallskip

  \noindent
  Network~$(a)$ involves an odd number of channels, and the word
  corresponding to the maximal path~$(a')$ starting on the free
  channel is $\sent{01221}$.
  Network~$(b)$ on an even number of channels contains two unused
  channels at layer~$2$, with two maximal paths~$(b')$ starting at a free
  channel and corresponding to the words $\sent{21212112}$ and
  $\sent{21121212}$ (its reverse); the corresponding word is thus the
  smallest of these two, namely $\sent{21121212}$. 
  Finally, network~$(c)$ consists of a cycle, $(c')$.  The words
  obtained by reading the possible maximal paths beginning with a
  layer~$1$ comparator are $\sent{122121}$ (starting on channel~$3$),
  $\sent{121221}$ and $\sent{122112}$ (starting on channels~$1$
  and~$5$, respectively, and proceeding in the reverse direction).
  The lexicographically smallest of these is $\sent{121221}$, and thus
  the corresponding word is $\sent{12122}$.
\end{example}

The set of all possible words (not necessarily minimal w.r.t.\ lexicographic ordering) can be described by the following BNF-style grammar.
\begin{align*}
  \mathsf{Word} &::= \mathsf{Head} \mid \mathsf{Stick} \mid \mathsf{Cycle}\\
  \mathsf{Head} &::= \sent{0}(\sent{12}+\sent{21})^\ast\\
   \mathsf{Stick} &::= (\sent{12}+\sent{21})^+ \\
   \mathsf{Cycle} &::= \sent{12}(\sent{12}+\sent{21})^\ast(\sent{1}+\sent{2})
\end{align*}

\begin{definition}
  The \emph{word representation} of a two-layer comparator network $C$, $\word(C)$, is the multi-set containing
  $\word(C')$ for each connected component $C'$ of $C$; we will denote this set
  by the ``sentence'' $w_1;w_2;\ldots;w_k$, where the words are in lexicographic
  order.
\end{definition}
In particular, a connected network will be represented by a sentence with only
one word, so there is no ambiguity in the notation $\word(C)$.  The
restriction that layer~$1$ be maximal corresponds to the requirement
that the multi-set $\word(C)$ have at most one Head-word.

\begin{wrapfigure}[9]{r}{0.2\textwidth}
  \vspace*{-3ex}
  \includegraphics{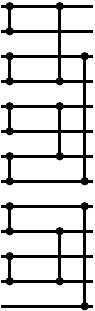}
  \quad
  \includegraphics{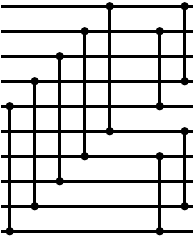}
\end{wrapfigure}
\ \vspace*{-\bigskipamount}

\begin{example}
  The first network on the right consists of two connected components,
  which are nets~$(a)$ and~$(b)$ of Example~\ref{ex:words}.
  It is therefore represented by the sentence containing the words
  corresponding to those nets, namely $\sent{01221;21121212}$.

  The second network consists of the first two layers of the $10$-channel
  sorting network from Figure~49 of~\cite{Knuth73}.  There are three
  connected components in this network, consisting of channels
  $\{1,4,6,9\}$, $\{2,5,7,10\}$ and $\{3,8\}$.  The first two components
  contain similar cycles represented by the word $\sent{122}$, while the
  third component yields the Stick-word $\sent{12}$.  The whole network
  is thus represented by the sentence $\sent{12;122;122}$.
\end{example}

Conversely, given a word $w=a_1\ldots a_\ell$ generated by the above grammar, we can
generate a corresponding two-layer network $\net(w)$ as follows.
\begin{enumerate}
\item The number of channels $n$ is:
  $|w|$, if $a_1=\sent{0}$ or $|w|$ is even;
  and $|w|+1$, if $|w|$ is odd and $a_1=\sent{1}$.
\item The first layer of $\net(w)$ is $F_n$.
\item If $w$ is a Stick-word or a Cycle-word, ignore the first character; then,
  for $k=0,\ldots,\left\lfloor\frac n2\right\rfloor-1$, take the next two
  characters $xy$ of $w$ and add a second-layer comparator between channels
  $2k+x$ and $2(k+1)+y$.  If $w$ is a Stick-word, ignore the last character;
  if $w$ is a Cycle-word, connect the two remaining channels at the end.
\item If $w$ is a Head-word, proceed as above but start by connecting the free
  channel to the channel indicated by the second character.
\end{enumerate}
This construction can be adapted straightforwardly to obtain a net with any
given first layer $L_1$: assuming the comparators in $L_1$ are numbered $1$ to
$\left\lfloor\frac n2\right\rfloor$, read $2k+x$ and $2(k+1)+y$ in step~$3$,
as ``the min/max channels from comparators $k$ and $k+1$'', where min or max
is chosen according to $x$ and $y$.

\begin{wrapfigure}[10]{h}{0.12\textwidth}
  \includegraphics{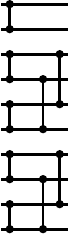}
  \quad\includegraphics{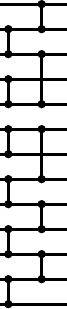}
\end{wrapfigure}

To generate a network from a sentence, simply generate the nets for each word in the sentence and compose them in the same order.

\begin{example}
  The two-layer networks on the right are generated from the sentences $\sent{12;122;122}$ and $\sent{01221;21121212}$, respectively.
  It can readily be seen that these networks are equivalent to the ones
  in the previous example.
\end{example}

\begin{lemma}
  Let $C$ and $C'$ be comparator networks on $n$ channels.  Then $C \approx C'$ iff $\word(C)=\word(C')$.
\end{lemma}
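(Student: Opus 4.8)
The plan is to show that $\word$ and $\net$ are mutually inverse, up to equivalence, between the $\approx$-classes of two-layer networks with maximal first layer and the canonical sentences produced by the grammar, so that $\word$ becomes a complete invariant for $\approx$. The first step is to reduce to the connected case. On the network side, a channel permutation maps connected components to connected components, and untangling reverses comparators without changing which channels are joined; hence $C \approx C'$ holds iff there is a bijection between their components pairing each with an equivalent one. On the word side, $\word(C)$ is by definition the multiset of the component words, rendered canonical by listing them in lexicographic order. Thus, once the connected case is settled, the general statement follows by matching components (the at-most-one-Head restriction, equivalent to maximality of the first layer, guarantees that the decomposition has the expected shape).

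For the connected case I would treat the two implications separately. For the ``if'' direction I would use the reconstruction map, claiming that for every connected two-layer network $C$ with maximal first layer one has $C \approx \net(\word(C))$. This is verified through the case analysis of the definition of $\word$ (Head, Stick, Cycle): reading $C$ along its maximal path fixes an order $p_1,\dots,p_k$ on its channels, and the permutation sending $p_i$ to the channel that $\net$ places in position $i$ carries the first layer of $C$ to $F_n$ and the second-layer comparators of $C$ to those produced by $\net$, possibly as reversed (generalized) comparators that untangling then normalizes; this exhibits the required equivalence. Granting this, if $\word(C)=\word(C')$ then $\net(\word(C))$ and $\net(\word(C'))$ are literally the same network, so $C \approx \net(\word(C)) = \net(\word(C')) \approx C'$ by transitivity.

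For the ``only if'' direction I would invoke Proposition~\ref{proposition:isomorphism} to pass from $C \approx C'$ to a label-preserving isomorphism $\GG(C) \cong \GG(C')$. The point is that $\word$ can be read off from this labelled graph together with the first-layer data: the underlying graph on channels has maximum degree two, so each component is a path or a cycle, and the free/min/max type of each channel is recorded by the absence of a first-layer comparator or by the $min$/$max$ labels of $\GG$. An isomorphism therefore matches components, preserving their path/cycle shape and their type sequences up to the reversal of a path or the rotation of a cycle; since $\word$ selects the lexicographically least such reading, with the prescribed starting conventions, the two words must coincide.

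The main obstacle, in both directions, is to pin down exactly which symmetries of a component are induced by channel permutations and untangling, and to check that these are precisely the ones quotiented out by the lexicographic minimisation. The delicate part is the behaviour of the $min$/$max$ labels: a permutation that does not preserve $F_n$, followed by untangling, reverses some comparators and swaps the channels beneath them, and one must verify that the net effect on a connected component is only a reversal (Stick/Head), or a rotation together with a reversal among the traversals that start with a first-layer comparator (Cycle), with the $min$/$max$ roles preserved rather than interchanged. A secondary bookkeeping issue is that $\GG(C)$ records neither $n$ nor the unused channels, so the free channel and the parity of $n$ — which distinguish the Head, Stick and Cycle cases — must be tracked separately in the connected-case argument rather than being inferred from the graph alone.
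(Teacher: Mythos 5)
Your proposal is correct, but it takes a genuinely different route from the paper on the forward implication. The paper argues directly at the channel level: the permutation $\pi$ witnessing $C\approx C'$ (followed by untangling) carries each path of $C$ starting at channel $j$ to a path of $C'$ starting at $\pi(j)$, preserving free/min/max types, so the two networks admit the same collection of path-words and hence the same lexicographically minimal ones. You instead pass through Proposition~\ref{proposition:isomorphism} and the labelled comparator graph $\GG$. That detour is sound, and it buys you exactly the point you flag as the ``delicate part'': preservation of min/max roles under permutation-plus-untangling is built into the edge labels of $\GG$, so invoking the proposition discharges that worry for you. What it costs is the bookkeeping you concede: the vertices of $\GG$ are comparators, not channels, so to read $\word$ off an isomorphism you must first recover the layer of each vertex (under maximality of the first layer this works: vertices with out-edges, and isolated vertices, are first-layer, while vertices with in-edges are second-layer) and account for channels invisible in $\GG$ (stick endpoints and an unused free channel). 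Your converse direction, $C\approx\net(\word(C))$ plus transitivity, is the fleshed-out version of what the paper dismisses as ``straightforward'', and it matches the paper's remark immediately after the lemma that $\word$ returns the minimum of the fiber $\net^{-1}(w)$. The component-wise reduction is also fine, provided components are taken at the channel level rather than read from $\GG$, so that an unused free channel contributes the word $\sent{0}$ --- which is what your parity bookkeeping amounts to. In short: same skeleton, but your forward direction is more indirect (and more automatic) than the paper's one-line permutation argument, and your backward direction is more explicit than the paper's.
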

\begin{proof}
  The forward implication follows from the observation that, for
  two-layer networks, $C \approx C'$ means that there is a permutation
  $\pi$ such that $C'$ is $\pi(C)$ possibly with some generalized comparators
  in layer~$2$. Then, any path obtained in $C$ beginning at
  channel $j$ can be obtained in $C'$ by beginning at channel
  $\pi(j)$, and reciprocally.  The converse implication is
  straightforward.
\end{proof}

In algebraic terms, the function $\word$ can be seen as a
``forgetful'' functor that forgets the specific order of channels in a
net, whereas $\net$ generates the ``free'' net from a given word.
Furthermore, $\word$ always returns the minimum element in the ``fiber''
$\net^{-1}(w)$, whence lexicographically minimal words can be used to
characterize equivalent nets.  This means that $\word$ and $\net$ form
an adjunction between suitably defined pre-orders.

As a consequence, the sets of all distinct two-layer networks
on $n$ channels, $G_n$, and their equivalence classes modulo
permutations, $R(G_n)$, can be generated simply by generating all multi-sets of
words with at most one Head-word yielding exactly $n$ channels.  This
procedure has been implemented straightforwardly in Prolog, yielding
the values in the table of Figure~\ref{fig:table}. Besides the values
given in the table, $|R(G_{20})|=15{,}906$ was computed in a few
seconds, and $|R(G_{30})|=1{,}248{,}696$ in under a minute.

\begin{figure*}[t]
\[\begin{array}{c|r|r|r|r|r|r|r|r|r|r|r|r}
n & \multicolumn1{c|}{3}
 & \multicolumn1{c|}{4}
 & \multicolumn1{c|}{5}
 & \multicolumn1{c|}{6}
 & \multicolumn1{c|}{7}
 & \multicolumn1{c|}{8}
 & \multicolumn1{c|}{9}
 & \multicolumn1{c|}{10}
 & \multicolumn1{c|}{11}
 & \multicolumn1{c|}{12}
 & \multicolumn1{c|}{13}
 & \multicolumn1{c}{14}  \\ \hline
|G_n| & 4 & 10 & 26 & 76 & 232 & 764 & 2{,}620 & 9{,}496 & 35{,}696 & 140{,}152 & 568{,}504 & 2{,}390{,}480 \\
|S_n| & 2 & 4 & 10 & 28 & 70 & 230 & 676 & 2{,}456 & 7{,}916 & 31{,}374 & 109{,}856 & 467{,}716 \\
|R(G_n)| & 4 & 8 & 16 & 20 & 52 & 61 & 165 & 152 & 482 & 414 & 1{,}378 & 1{,}024 \\
|R(S_n)| & 2 & 2 & 6 & 6 & 14 & 15 & 37 & 27 & 88 & 70 & 212 & 136 \\ 
|R_n| & 1 & 2 & 4 & 5 & 8 & 12 & 22 & 21 & 48 & 50 & 117 & 94 \\ 
\hline
\end{array}\]
\[\begin{array}{c|r|r|r|r|r}
n
 & \multicolumn1{c|}{15}
 & \multicolumn1{c|}{16}
 & \multicolumn1{c|}{17}
 & \multicolumn1{c|}{18}
 & \multicolumn1{c}{19} \\ \hline
|G_n| & 10{,}349{,}536 & 46{,}206{,}736 & 211{,}799{,}312 & 997{,}313{,}824 & 4{,}809{,}701{,}440 \\
|S_n| & 1{,}759{,}422 & 7{,}968{,}204 & 31{,}922{,}840 & 152{,}664{,}200 & 646{,}888{,}154\\
|R(G_n)| & 3{,}780 & 2{,}627 & 10{,}187 & 6{,}422 & 26{,}796 \\
|R(S_n)| & 494 & 323 & 1{,}149 & 651 & 2{,}632 \\ 
|R_n| & 262 & 211 & 609 & 411 & 1{,}367 \\ 
\hline
\end{array}\]
\[\begin{array}{c|r|r|r|r|r|r|r|r|r|r|r|r}
n
 & \multicolumn1{c|}{20}
 & \multicolumn1{c|}{21}
 & \multicolumn1{c|}{22}
 & \multicolumn1{c|}{23}
 & \multicolumn1{c|}{24}
 & \multicolumn1{c|}{25}
 & \multicolumn1{c|}{26}
 & \multicolumn1{c|}{27}
 & \multicolumn1{c|}{28}
 & \multicolumn1{c|}{29}
 & \multicolumn1{c|}{30}
 & \multicolumn1{c}{31}
 \\ \hline
|R(S_n)| & 1{,}478 & 5{,}988 & 3{,}040 & 13{,}514 & 6{,}744 & 30{,}312 & 14{,}036 & 67{,}638 & 30{,}552 & 150{,}128 & 64{,}168 & 331{,}970 \\
|R_n| & 894 & 3{,}098 & 1{,}787 & 6{,}920 & 3{,}848 & 15{,}469 & 7{,}830 & 34{,}318 & 16{,}690 & 75{,}979 & 34{,}486 & 167{,}472 \\ \hline
\end{array}\]
\[\begin{array}{c|r|r|r|r|r|r|r|r|r}
n
 & \multicolumn1{c|}{32}
 & \multicolumn1{c|}{33}
 & \multicolumn1{c|}{34}
 & \multicolumn1{c|}{35}
 & \multicolumn1{c|}{36}
 & \multicolumn1{c|}{37}
 & \multicolumn1{c|}{38}
 & \multicolumn1{c|}{39}
 & \multicolumn1{c}{40} \\ \hline
|R(S_n)| & 138{,}122 & 731{,}000 & 291{,}090 & 1{,}604{,}790 & 622{,}136 & 3{,}511{,}250 & 1{,}313{,}262 & 7{,}663{,}112 & 2{,}792{,}966 \\
|R_n| & 73{,}191 & 368{,}143 & 152{,}503 & 806{,}710 & 322{,}891 & 1{,}763{,}133 & 676{,}431 & 3{,}843{,}848 & 1{,}429{,}836 \\ \hline
\end{array}\]
\caption{Table detailing the number of all distinct two-layer networks
on $n$ channels, $G_n$, the number of saturated such networks, $S_n$, the number of equivalence classes modulo permutations, $R(G_n)$, the number of saturated equivalence classes modulo permutations, $R(S_n)$, and the number of saturated equivalence classes modulo permutations and reflections, $R_n$. When searching for optimal-depth sorting networks, only networks extending $R_n$ need to be considered.}
\label{fig:table}
\end{figure*}

The sequence $|G_n|$ is actually known in Mathematics: it is sequence
A000085 in The On-Line Encyclopedia of Integer
Sequences,\footnote{\url{https://oeis.org/A000085}} and corresponds
(among others) to the number of self-inverse permutations on $n$
letters. The first two elements of the sequences coincide. Thus, to prove the above claim, it suffices to show that $|G_n|$ satisfies
the characteristic recurrence for that sequence.%

\begin{theorem}\label{thm:recurrence}
  $\left|G_n\right|=\left|G_{n-1}\right|+(n-1)\left|G_{n-2}\right|$ for $n\geq3$.
\end{theorem}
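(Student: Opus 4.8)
The plan is to observe that, because the first layer is fixed to be $F_n$, the set $G_n$ is in bijection with the set of possible second layers, i.e.\ the set of all layers on $n$ channels. A layer is a set of comparators $(i,j)$ with $i<j$ in which every channel occurs in at most one comparator; equivalently, it is a partial matching on the $n$ channels (a set of pairwise-disjoint unordered pairs, with the unmatched channels left free). The correspondence is a bijection since, given the underlying matching, ordering each pair as $(i,j)$ with $i<j$ is forced. So I would first reduce the claim to counting partial matchings on $n$ channels, and then establish the recurrence by a case analysis on the status of the largest channel $n$ in the second layer: either channel~$n$ is unused in layer~$2$, or it is matched with one of the other $n-1$ channels.

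In the first case, deleting channel~$n$ yields a layer on channels $\{1,\ldots,n-1\}$, and conversely every such layer extends uniquely by leaving channel~$n$ free. This is a bijection with $G_{n-1}$ and contributes $|G_{n-1}|$. In the second case, there are exactly $n-1$ choices for the partner $i\in\{1,\ldots,n-1\}$ of channel~$n$. Once the comparator $\{i,n\}$ is fixed, the remaining $n-2$ channels $\{1,\ldots,n-1\}\setminus\{i\}$ carry an arbitrary layer; relabelling them in the unique order-preserving way onto $\{1,\ldots,n-2\}$ gives a bijection with $G_{n-2}$, so this case contributes $(n-1)\,|G_{n-2}|$.

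Since the two cases are disjoint and exhaustive, summing the two counts yields $|G_n|=|G_{n-1}|+(n-1)\,|G_{n-2}|$ for $n\geq3$, which is precisely the characteristic recurrence of sequence A000085; combined with the agreement of the first two terms, this identifies $|G_n|$ with the number of self-inverse permutations (involutions) on $n$ letters, as claimed.

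The main point requiring care is not depth but rigour: I must check that the deletion and relabelling maps are genuine \emph{bijections} onto $G_{n-1}$ and $G_{n-2}$, so that the contributions are exact rather than mere bounds. In particular, in the second case I would verify that the order-preserving relabelling is independent of which partner~$i$ was chosen, so that each of the $n-1$ choices ranges independently over \emph{all} of $G_{n-2}$; this is what legitimizes the factor $(n-1)$. No genuine obstacle is expected beyond this bookkeeping.
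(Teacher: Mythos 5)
Your proof is correct and follows essentially the same route as the paper's: both reduce the count to partial matchings (second layers) and split on whether the last channel is unused or matched to one of the other $n-1$ channels, the paper phrasing this via insertion operations $L^\bullet$ and $L^{k\bullet}$ rather than your deletion-and-relabelling maps, which are just the inverse direction of the same bijection.
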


\begin{proof}
  The correspondence is at the level of layers, not of networks.  Consider the
  following two operations on layers.
  \begin{enumerate}
  \item Given a layer $L$, $L^\bullet$ is $L$ with an extra unused channel
    at the end.
  \item Given a layer $L$, $L^{k\bullet}$ is $L$ with two extra channels connected
    by a comparator: one between channels $k$ and $k+1$ of $L$, the other at the
    end.
  \end{enumerate}
  Given a layer $L'$ on $n$ channels, there is a unique way to write $L'$ as
  $L^\bullet$ or $L^{k\bullet}$ (according to whether the last channel of $L'$ is
  used), establishing the desired relationship.
  There seems to be no obvious relationship between the two layer networks
  containing $L'$ and $L$ as their second layers.
\end{proof}\bigskip

Alternatively to considering the recurrence, one could argue that $|G_n|$ corresponds to the number of matchings in a complete graph with $n$ nodes, since every comparator joins two channels.\footnote{Thanks to Daniel Bundala for pointing out this observation.}

The sequence $|R(G_n)|$ does not appear to be known already, and it does not have
such a simple description.  The following properties are however interesting.

\begin{theorem}\mbox{}\label{thm:interesting}
  \begin{enumerate}
  \item The number of non-equivalent redundant two-layer networks using $n$
    channels is $|R(G_{n-2})|$.
  \item For odd $n$, $\left|R(G_n)\right|=\left|R(G_{n-1})\right|+2\left|R(G_{n-2})\right|$.
  \end{enumerate}
\end{theorem}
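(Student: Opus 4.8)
The plan is to carry out both parts at the level of the sentence representation of Section~\ref{sec:pathrep}: by the lemma characterizing $C\approx C'$ as $\word(C)=\word(C')$, counting elements of $R(G_m)$ is the same as counting sentences (multisets of component words with at most one Head-word) whose channels total $m$. Both recurrences then become purely combinatorial manipulations of these multisets, so I would phrase everything as explicit bijections between sets of sentences.

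For part~(1) I would first note that a two-layer network with maximal first layer is redundant exactly when its second layer repeats a first-layer comparator, and that such a repeated comparator $(2k-1,2k)$ forms an isolated two-channel connected component (call it $\rho$): since each of $2k-1,2k$ already occurs in a layer-$1$ comparator, neither occurs in any other layer-$2$ comparator. Thus a sentence is redundant iff, as a multiset, it contains at least one copy of $\rho$. Removing a single copy of $\rho$ sends redundant sentences on $n$ channels to arbitrary sentences on $n-2$ channels, and adjoining one copy of $\rho$ is the inverse. The point needing care is networks carrying several repeated comparators; there the map remains well defined and bijective by multiset cancellation (removing a fixed element and adjoining it are two-sided inverses). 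Hence redundant classes on $n$ channels biject with all classes on $n-2$ channels, giving $|R(G_{n-2})|$.

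For part~(2), with $n$ odd there is exactly one free channel, so every sentence has exactly one Head-word $H=\sent{0}r$ with $r\in(\sent{12}+\sent{21})^\ast$, and I would split on whether $r$ is empty. If $r$ is empty then $H=\sent{0}$ is an isolated free channel, and the remaining Stick- and Cycle-words form a sentence on the other $n-1$ channels; since $n-1$ is even there is no free channel and hence no Head-word there, so deleting this lone \sent{0} is a bijection onto the sentences counted by $R(G_{n-1})$, contributing $|R(G_{n-1})|$. If $r$ is nonempty, write $H=\sent{0}\,p\,r'$ with $p\in\{\sent{12},\sent{21}\}$ the leading pair; trimming this pair (replacing $H$ by $\sent{0}r'$ and leaving the other components fixed) removes two channels and yields a sentence on $n-2$ channels, again with a single Head-word, while recording $p$ supplies one bit. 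At the string level $\sent{0}\,p\,r'\mapsto(\sent{0}r',p)$ is manifestly invertible, so this case is a bijection onto $R(G_{n-2})\times\{\sent{12},\sent{21}\}$ and contributes $2|R(G_{n-2})|$; summing the two cases gives the recurrence.

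The main obstacle I anticipate is not the algebra but the bookkeeping that makes these maps genuinely well defined on equivalence classes. For part~(1) it is verifying that redundancy is detected solely by the presence of a $\rho$-component, and that cancellation correctly handles repeated copies. For part~(2) it is checking that, for odd $n$, the Head-word is canonical with no lexicographic choice (because the maximal path from the unique free channel is forced), so that trimming its leading pair is unambiguous and the factor of two is exactly the independent choice of that pair being \sent{12} or \sent{21}.
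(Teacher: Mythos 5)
Your proposal is correct and follows essentially the same route as the paper's proof: both parts are argued at the level of the word/sentence representation, part~(1) by removing/adjoining the isolated repeated-comparator component, and part~(2) by splitting on whether the unique Head-word is \sent{0} and otherwise trimming a pair of letters while recording whether it is \sent{12} or \sent{21} (the paper trims the \emph{last} two letters of the Head-word rather than the first two, an immaterial difference). If anything, your write-up is slightly more explicit than the paper's about the points it takes for granted: that redundancy corresponds exactly to the presence of the repeated-comparator component, and that the Head-word of an odd-channel network is canonical, so the maps are genuinely well defined on equivalence classes.
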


\begin{proof}
  The proof is based on the word representation of the nets.
  \begin{enumerate}
  \item If $C$ is a redundant net, then the sentence $\word(C)$ contains $\sent{12}$.
    Removing one occurrence of this word yields a sentence corresponding to a
    network with $n-2$ channels.  This construction is reversible, so there are
    $|R(G_{n-2})|$ sentences corresponding to redundant networks on $n$ channels.

  \item If $n$ is odd, then $\word(C)$ contains exactly one word beginning with
    \sent{0}.  If this word is \sent{0}, then removing it yields a network with
    $n-1$ channels, and this construction is reversible.  Otherwise, removing the
    two last letters in this word yields a network with $n-2$ channels; since the
    removed letters can be \sent{12} or \sent{21}, this matches each network
    on $n-2$ channels to two networks on $n$ channels.
  \end{enumerate}
\end{proof}\bigskip

The construction we described does not take into account the notion of
saturation.  However,
the characterization of saturation given by Theorem~\ref{thm:sat-thm} is
straightforward to translate in terms of the word associated with a network.
\begin{corollary}
  Let $C$ be a two-layer network. Then $C$ is
  saturated if $w=\word(C)$ satisfies the following properties.
  \begin{enumerate}
  \item If $w$ contains \sent{0} or $\sent{12}$, then all other words
    in sentence $w$ are cycles.
  \item No stick in $w$ has length $4$.
  \item Every stick in $w$ begins and ends with the same symbol.
  \item If $w$ contains a head or stick ending with $c$, then every head or stick in $w$ ends with $c$, for $c\in\{\sent{1},\sent{2}\}$.
  \end{enumerate}
\end{corollary}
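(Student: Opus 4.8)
The plan is to derive the corollary from Theorem~\ref{thm:sat-thm}, which guarantees that a non-redundant two-layer network with first layer $F_n$ is saturated provided it contains none of the six patterns $(1)$, $(2a)$--$(2c)$, $(3a)$--$(3b)$ of Theorem~\ref{thm:sat-char}. So I would assume that $w=\word(C)$ satisfies the four listed properties and prove that $C$ is non-redundant and pattern-free; saturation is then immediate. Non-redundancy is essentially free: a repeated first-layer comparator is a two-channel cycle, which the grammar of Section~\ref{sec:pathrep} does not generate, so any network whose word is defined by that grammar carries no repeated comparator.

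The core of the argument is a dictionary translating the local patterns into conditions on $w$. The key fact is that the channels left unused by the second layer are exactly the \emph{endpoints} of the Stick- and Head-words: a Stick contributes its two endpoints, whose types are the first and last letters of its word; a Head contributes the channel at its non-free end, of type its last letter (together with the free channel itself, of type \sent{0}, when the Head is the single letter \sent{0}); and a Cycle contributes none. Each forbidden pattern is precisely a local configuration in which a second-layer comparator can be placed between two such unused channels without enlarging the output set, so I would match every pattern to the property of $w$ that excludes it.

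Carrying this out, pattern~$(1)$ records two unused channels of opposite type. Within a single (genuine) Stick this is ruled out by Property~3, which forces the first and last letters to agree; between two distinct components it is ruled out by Property~4, which forces all Heads and Sticks to share their final letter. Together Properties~3 and~4 make every unused channel carry the same type, killing pattern~$(1)$ everywhere. Patterns~$(3a)$ and~$(3b)$ are exactly the length-four Sticks \sent{2112} and \sent{1221} with equal endpoints; these, together with the opposite-endpoint length-four Sticks \sent{1212} and \sent{2121} already removed by Property~3, are precisely what Property~2 forbids. Finally, patterns~$(2a)$--$(2c)$ place the free channel (equivalently, in the even case, a lone first-layer comparator whose two endpoints are a min and a max) next to a further unused channel; Property~1 excludes this by demanding that, once $w$ contains \sent{0} or \sent{12}, every other component be a Cycle and hence contribute no unused endpoint.

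The step I expect to be most delicate is the behaviour of the shortest components, where the clean dictionary frays. For the length-two Stick \sent{12} the two ``endpoints'' are the two channels of one first-layer comparator, so connecting them merely repeats that comparator and does \emph{not} strictly shrink the output set; this is why \sent{12} is tolerated (subject to Property~1) even though it begins and ends with different letters, and it is the reason Property~3 must be read as constraining only the genuine, longer Sticks. Dually, for the length-four Sticks one must confirm, via the explicit output computations behind Theorem~\ref{thm:sat-char}, that joining the two unused endpoints really does produce a strictly smaller output set, whereas for length six and beyond it does not---this is exactly the dividing line that makes Property~2 single out length four. Pinning down these two boundaries, and checking that the cross-component instances of pattern~$(1)$ are exhausted by the Stick--Stick, Stick--Head and \sent{12}--anything combinations, is where the real work lies.
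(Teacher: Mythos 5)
Your proposal is correct and takes essentially the same route as the paper: the paper offers no explicit proof, presenting the corollary as a ``straightforward'' translation of Theorem~\ref{thm:sat-thm} into the word representation, and your pattern-to-word dictionary (pattern~$(1)$ handled by Properties~3--4, patterns~$(3a)$/$(3b)$ by Property~2, patterns~$(2a)$--$(2c)$ by Property~1, with non-redundancy coming from the grammar excluding repeated comparators) is exactly that translation. Your reading of \sent{12} as exempt from Property~3 also matches the paper's intent, as confirmed by its restricted grammar, which admits \sent{12} as a Stick; the only quibble is that your closing paragraph overstates the remaining work, since the output computations for length-four versus longer sticks are already contained in Theorems~\ref{thm:sat-char} and~\ref{thm:sat-thm} and need not be redone for the corollary.
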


Thus, the set of saturated two-layer networks can be generated by using the
following restricted grammar.
\begin{align*}
  \mathsf{Word} &::= \mathsf{Head} \mid \mathsf{Stick} \mid \mathsf{Cycle}\\
  \mathsf{Stick} &::= \sent{12} \mid \mathsf{eStick} \mid \mathsf{oStick} \\
  \mathsf{Head} &::= \sent0 \mid \mathsf{eHead} \mid \mathsf{oHead}\\
  \mathsf{eStick} &::= \sent{12}(\sent{12}+\sent{21})^+\sent{21} \\
  \mathsf{eHead} &::= \sent0(\sent{12}+\sent{21})^\ast\sent{12}\\
  \mathsf{oStick} &::= \sent{21}(\sent{12}+\sent{21})^+\sent{12} \\
  \mathsf{oHead} &::= \sent0(\sent{12}+\sent{21})^\ast21\\
  \mathsf{Cycle} &::= \sent{12}(\sent{12}+\sent{21})^\ast(\sent{1}+\sent{2})
\end{align*}

Furthermore, sentences are multi-sets $M$ such that:
(i)~if $M$ contains the words \sent0\ or \sent{12}, then all other elements of
$M$ are cycles;
(ii)~if $M$ contains an $\mathsf{eHead}$ or $\mathsf{eStick}$, then it contains
no $\mathsf{oHead}$ or $\mathsf{oStick}$.
With these restrictions, generating all saturated networks for $n\leq 20$
can be done almost instantaneously.  The numbers $S_n$ of
saturated two-layer networks and $R(S_n)$ of equivalence classes
modulo permutation are given in the first four lines in the table of Figure~\ref{fig:table}.

Bundala and Z{\'a}vodn{\'y} mention that the number of two-layer
networks could further be restricted by considering
reflections~\cite{DBLP:conf/lata/BundalaZ14} (with acknowledgement to
D.E.~Knuth).  The reflection of a comparator network on $n$ channels is
the network obtained by replacing each comparator $(i,j)$ by the comparator
$(n-j+1,n-i+1)$; when the first layer is the set $F'_n$ of
comparators of the form $(i,n-i+1)$, reflection leaves it unchanged.
Furthermore, they show that a two-layer network with first layer $F'_n$ can
be extended to a sorting network if, and only if, its reflection can be
extended to a sorting network, hence reflections can be removed from
$R(S_n)$ when searching for optimal depth sorting networks. 

Since the word representation is defined for any first layer, this symmetry
break can be encoded by a similar technique as the one applied for saturation.
By removing $\mathsf{eStick}$ and $\mathsf{eHead}$ from the above grammar,
we directly generate only the $118$ representatives for $13$-channel networks described
in~\cite{DBLP:conf/lata/BundalaZ14}.  Furthermore, it is
possible to have distinct cycles whose reflections are equivalent
(but not equal); this brings the number
of relevant two-layer networks on $13$ channels to~$117$.  The last line in the table of
Figure~\ref{fig:table} above details the number $|R_n|$ of representatives modulo equivalence
and reflection for each value of $n\leq 40$.  We can compute the set
$R_{30}$ in less than one minute and $R_{40}$ in approximately two hours.

Having computed $R_{16}$, we can directly verify the known value $9$ for the optimal depth of a 16-channel sorting network, obtained only indirectly in \cite{DBLP:conf/lata/BundalaZ14}.
This direct proof involves
showing that none of the $211$
two-layer comparator networks in $R_{16}$ extends to a sorting network
of depth~$8$. For this, we use an encoding to Boolean
satisfiablity (SAT) as described in~\cite{DBLP:conf/lata/BundalaZ14},
where for each network $C$
in $R_{16}$, we generate a
formula $\varphi_C$
that
is satisfiable if and only if there exists a
sorting network of depth $8$
extending $C$.
Showing the unsatisfiability of these $211$ SAT instances can be performed in
parallel, with the hardest instance (a CNF with approx.\ $450{,}000$ clauses) requiring approx.\ $1800$ seconds running on a single thread of a cluster of Intel Xeon E5-2620 nodes clocked at $2$~GHz.

However, this approach does not directly work for $n=17$, where the best known
upper bound is $11$.
Attempting to show that there is no sorting network of depth $10$
requires analyzing the networks in
$R_{17}$. The resulting $609$ formulas have more than five
million clauses each, and none could be solved within a couple of weeks.
It appears that finding the optimal
depth of sorting networks with more than $16$ channels is a hard
challenge that will require prefixes with more than $2$ layers.

\section{Conclusion}

We presented an efficient technique to generate, modulo symmetry, the
set $R(S_n)$ of all saturated two-layer comparator networks on $n$~channels,
as well as its restriction $R_n$ to exclude networks
that are equivalent modulo reflection.

As noted by Parberry in~1991 and again by Bundala and Z{\'a}vodn{\'y} in~2014,
computing $R(S_n)$ and $R_n$ is a crucial step in the search for
optimal depth sorting networks on $n$~channels.
Using our approach we can compute $R(S_{13})$ in under a second vs
$30$~minutes using the brute force approach applied in \cite{DBLP:conf/lata/BundalaZ14},
and
improve the number of relevant two-layer prefixes to be considered
from $212$ to $117$ by eliminating
reflections.

In personal communication, Bundala and Z{\'a}vodny state that their
brute-force approach does not scale beyond $n=13$. This is not
suprising, as there is an exponential growth of the number of
candidate networks, a quadratic number of subsumption tests between
the candidate networks, and, for each subsumption test, a factorial number
of permutations and an exponential number of inputs to consider.

The smallest open instance of the optimal depth sorting network
problem is for $n=17$. We can easily compute $R(S_{17})$ (in $2$~seconds)
as well as its restriction to networks modulo
reflection. This later set consists of only $609$ networks and is a key
ingredient to solving this
problem,
effectively reducing the search space more than
$300{,}000$-fold.

\bibliographystyle{abbrv}
\bibliography{paper}

\end{document}